\newtheoremstyle{break}  
  {\topsep}   
  {\topsep}   
  {\itshape}  
  {0pt}       
  {\normalfont\normalsize\bfseries} 
  {.}         
  {5pt plus 1pt minus 1pt}  
  {}          
\theoremstyle{break}
\def\@begintheorem#1#2{\trivlist
   \item[\hskip \labelsep{\theoremfont #1\ #2.}]\itshape}
\def\@opargbegintheorem#1#2#3{\trivlist
      \item[\hskip \labelsep{\theoremfont #1\ #2.\ (#3)}]\itshape}
\begin{document}

\newcommand{\mat}[1]{\ensuremath{\mathbf{#1}}}

\newenvironment{disarray}%
 {\everymath{\displaystyle\everymath{}}\array}%
 {\endarray}
\everymath{\displaystyle\everymath{}}

\newtheorem{theorem}{Theorem}
\newtheorem{lemma}{Lemma}
\newtheorem*{definition}{Definition}
\newtheorem{corollary}{Corollary}
\newtheorem*{conjecture}{Conjecture}

\newtheorem{condition}{Condition}

\def\sgn{\mathrm{sign}} 

\newcommand{\prox}{\operatorname{prox}}    

\def\gradient{\nabla}   

\newcommand{\argmin}[1]{\underset{#1}{\operatorname{argmin}}}

\newcommand{\argmax}[1]{\underset{#1}{\operatorname{argmax}}}

\newcommand{\mypara}[1]{{\noindent \bf{#1.} }}

\newcommand{\bb}[1]{\bm{\mathrm{#1}}}

\def\Tr{\mathrm{T}}
\def\RR{\mathbb{R}}
\def\Sym{\mathrm{Sym}}
\def\Iso{\mathrm{Iso}}
\def\Conv{\mathrm{Conv}}
\def\Aff{\mathrm{Aff}}
\def\trace{\mathrm{tr}}
\def\diag{\mathrm{diag}}

\def\st{\,\,\,\,\mathrm{s.t.}\,\,}

\def\Acal{\mathcal{A}}
\def\Bcal{\mathcal{B}}
\def\Pcal{\mathcal{P}}
\def\Dcal{\mathcal{D}}
\def\Scal{\mathcal{S}}

\def\Ab{\bb{A}}
\def\Bb{\bb{B}}
\def\Cb{\bb{C}}
\def\Db{\bb{D}}
\def\Pb{\bb{P}}
\def\Qb{\bb{Q}}
\def\Db{\bb{D}}
\def\Fb{\bb{F}}
\def\Gb{\bb{G}}
\def\Hb{\bb{H}}
\def\Xb{\bb{X}}
\def\Ib{\bb{I}}
\def\Ub{\bb{U}}
\def\Rb{\bb{R}}
\def\Eb{\bb{E}}
\def\Nb{\bb{N}}
\def\Mb{\bb{M}}
\def\Sb{\bb{S}}

\def\eb{\bb{e}}
\def\fb{\bb{f}}
\def\ub{\bb{u}}
\def\qb{\bb{q}}
\def\rb{\bb{r}}
\def\vb{\bb{v}}
\def\cb{\bb{c}}
\def\deltab{\bb{\delta}}

\newcommand{\ones}{\bb{1}}

\def\Pib{\bb{\Pi}}
\def\Lambdab{\bb{\Lambda}}
\def\alphab{\bb{\alpha}}
\def\gammab{\bb{\gamma}}

\def\Pir{\CMcal{P}}
\def\Dr{\CMcal{D}}
\def\Ir{\CMcal{I}}
\def\CR{\CMcal{C}}
\def\Sropt{\CMcal{S}^\ast}
\def\Dropt{\CMcal{D}^\ast}
\def\Piropt{\CMcal{P}^\ast}

\def\Sr{\mathrm{S}}
\def\Cr{\mathrm{C}}

\def\dis{\mathrm{dis}}

\def\specrad{\sigma}




\title{Graph matching: relax or not?}





\author{Yonathan Aflalo\footnote{Dept. of Computer Science, Technion, Israel} \and 
Alex Bronstein\footnote{School of Electrical Engineering, Tel Aviv University, Israel} \and Ron Kimmel\footnotemark[1] }

\maketitle

\begin{abstract} 
We consider the problem of exact and inexact matching of weighted undirected graphs, in which
a bijective correspondence is sought to minimize a quadratic weight disagreement.  This computationally challenging problem
is often relaxed as a convex quadratic program,
in which the space of permutations is replaced by the space of doubly-stochastic matrices. However, the applicability of such a relaxation
is poorly understood.
We define a broad class of \emph{friendly} graphs characterized by an easily verifiable spectral property. We prove that for friendly graphs, the convex relaxation is guaranteed to find the exact isomorphism or certify its inexistence.
This result is further extended to approximately isomorphic graphs, for which we develop an explicit bound on the amount of weight disagreement under which
the relaxation is guaranteed to find the globally optimal approximate isomorphism.
We also show that in many cases, the graph matching problem can be further harmlessly relaxed to
a convex quadratic program with only $n$ separable linear equality constraints, which is substantially
more efficient than the standard relaxation involving $2n$ equality and $n^2$ inequality constraints.
Finally, we show that our results are still valid for unfriendly graphs if additional information in the form of seeds or attributes is allowed, with the latter satisfying an easy to verify spectral characteristic.

 \end{abstract}





\section{Introduction}

Graphs are a natural abstraction in a variety of problems and are particularly useful for modeling structures,
frequently arising in different domains of science and engineering. In many applications, graphs have to be compared or brought into correspondence.
The term \emph{graph isomorphism} or the less precise term \emph{graph matching} (used mainly in the applied community) refer to a class of computational problems consisting of finding an optimal correspondence between the vertices of two graphs that minimizes adjacency disagreement.
The uses of graph models in general and graph matching in particular are too numerous to allow a comprehensive review within the scope of this paper. In what follows, we will just list a few prominent ones, referring the reader to a (partial) review of applications with a particular emphasis on the domain of pattern recognition \cite{conte2004thirty}.
In computer vision and pattern recognition, graph matching is used for stereo vision and 3D reconstruction \cite{christmas1995structural}, object detection and recognition \cite{lades1993distortion,pelillo1999matching} 
-- in particular optical character recognition \cite{rocha1994shape}, and image and video indexing and retrieval \cite{berretti2001efficient}.
In biometric applications, graph-based techniques have been widely used for identification tasks implemented by means of elastic graph matching.
These include, among other, face recognition and pose estimation \cite{wiskott1997face}, and fingerprint recognition \cite{isenor1986fingerprint}.
In biomedical applications, graphs have been used to model vascular structures and, more recently,
to represent connections between neurons \cite{sporns2005human}.
In data mining, graphs are used to model networks, including the Web and social networks \cite{cook2006mining}.

Despite the tremendous popularity of graph models, graph matching remains a computationally intensive task. In the strict sense, it is computationally intractable as no polynomial algorithms are known for its solution, except for graphs admitting certain particular structures.
The increase in the available computational power of modern computers and the remarkable development of numerous efficient graph matching heuristics have made graph matching feasible for relatively large graphs, counting about a thousand of vertices. However, novel applications such as the analysis of brain graphs -- the so-called \emph{connectomes}, and social networks require matching of graphs with millions if not billions of vertices. These scales are far beyond the reach of existing heuristics. Furthermore, a major disadvantage of graph matching heuristics is that, in general, they are not guaranteed to find the optimal matching, or at least to guarantee how far the found matching is from the optimal one.

\vspace{-3mm}

\paragraph{Contributions.} In this paper, we focus on the family of scalable graph matching heuristics based on continuous (in particular, \emph{convex}) optimization \cite{schellewald2001evaluation}. We analyze the standard convex relaxation of the graph matching problem based on replacing the space of permutations by the space of doubly-stochastic matrices, and make the following contributions:

First, we establish conditions under which the relaxation is equivalent to exact graph matching, in the sense that it is guaranteed to find the exact graph isomorphism if such exists, or certify its inexistence (Theorem~\ref{thm:asym:iso}). The class of graphs on which such an equivalence holds is characterized by an easily verifiable spectral property we call \emph{friendliness}, and is surprisingly large -- practically, as large as the class of asymmetric graphs.

Second, we generalize this result to inexact graph matching, providing an explicit bound on the amount of weight disagreement under which the relaxation is guaranteed to find the globally optimal approximate isomorphism (Theorem~\ref{thm:asym:eps-iso}). 

Third, we show that equivalence of convex relaxation to exact graph matching still holds for unfriendly graphs if additional information besides the graph adjacencies is allowed to disambiguate the symmetries. Specifically, we consider such additional information in the form of a  collection of knowingly corresponding functons (seeds) or vector-valued vertex attributes, and show a constructive spectral condition on the seeds/attributes under which convex relaxation of seeded/attributed graph matching is guaranteed to find one of the isomorphisms.
(Theorem \ref{theorem:symmetric}).

These three contributions establish the boundaries of applicability of the convex relaxation, which have so far been poorly understood. Finally, a byproduct of our analysis is the fact that the former results are also satisfied by a simpler convex relaxation, in which the space of permutations is replaced with an affine space of matrices we call \emph{pseudo-stochastic}. This alternative relaxation leads to a simpler, and potentially more efficiently solvable, optimization problem.

\vspace{-3mm}

\paragraph{Notation.} The following notation will be used in the rest of the paper: vectors and matrices are denoted in bold lower and upper case, respectively, and their elements by lower and upper case italic with appropriate subscript indices. The norm $\| \cdot \|$ will denote the standard $\ell_2$ norm of a vector, and the spectral norm of a matrix (to be distinguished from the Frobenius norm, specified with the subscript $_\mathrm{F}$). 
Throughout the paper, the not so rigorous term \emph{matching} refers to the exact or inexact graph isomorphism problems rather than to the graph-theoretic notion of an independent edge set.

\section{Graph matching}

Let $\Acal = (V,\Ab)$ and $\Bcal = (V,\Bb)$ be two undirected graphs built upon a common vertex set $V$, which for convenience is assumed to be $V=\{1,\dots,n\}$. As $\Acal$ and $\Bcal$ are fully represented by the symmetric $n\times n$ adjacency matrices $\Ab$ and $\Bb$, we will use the two notations interchangeably. We allow the adjacency matrices to contain non-binary edge weights, and henceforth consider this case without explicitly specifying that the graphs are weighted.
Let us denote by $\Pir(n) = \{ \pi : V \rightarrow V \}$ the space of vertex permutations,
which can be equivalently represented by $n \times n$ permutation matrices of the form
$\{ \Pib \in \{0,1 \}^{n\times n}  : \Pib \ones = \Pib^\Tr \ones = \ones \}$, with
$\ones$ denoting a column vector of ones. With some abuse of notation, we will refer to both spaces as $\Pir(n)$, dropping the $n$ whenever possible.
A permutation $\pi$ represents a bijective correspondence between the two graphs, mapping each vertex $i$ in $\Acal$ to a vertex $\pi_i$ in $\Bcal$.
Similarly, for each edge $(i,j)$, the correspondence pulls back the adjacency weight $b_{\pi_i,\pi_j}$.
The latter can be stated equivalently by constructing a new adjacency matrix $\Pib^\Tr \Bb \Pib$,
where $\bb{\Pib}$ is the permutation matrix representing $\pi$.
To measure the adjacency disagreement under correspondence, we define on $\Pir$
a \emph{distortion} function of the form $\dis_{\Acal\mapsto \Bcal}(\Pib) = \| \Ab - \Pib^\Tr \Bb \Pib \|$, with $\| \cdot \|$ denoting some norm (for brevity, we will drop $\Acal\mapsto\Bcal$ whenever possible). 
The graphs are said to be \emph{isomorphic} if there exists a zero-distortion
permutation. We denote the collection of all isomorphisms relating $\Acal$ and $\Bcal$ by
$\Iso(\Acal \mapsto \Bcal) = \{ \Pib : \dis(\Pib) = 0 \}$.

In this notation, the \emph{graph matching} (GM) problem consists of finding a zero-distortion permutation; such a permutation might not be unique if the graph possesses symmetries, as we clarify in the sequel. The closely related \emph{graph isomorphism} (GI) problem consists of verifying whether a zero-distortion permutation exists.
This strict setting 
is usually referred to as \emph{exact}.
Since in practical applications the matched graphs might be contaminated by noise, GM is frequently stated in the \emph{inexact} flavor, consisting of finding a minimum rather than zero distortion permutation.
It is worthwhile noting that the formulation of GM based on a norm of the adjacency disagreement is extremely popular in computer vision, shape analysis \cite{bronstein2006generalized}, and neuroscience \cite{vogelstein2011large}, where graphs are used to represent geometric structures, and the matching distortion can be interpreted as the strength of geometric deformation.
While we focus exclusively on this class of problems, several alternative formulations of GM, particularly those based on edit distance \cite{gao2010survey} and maximum common subgraph \cite{Pelillo04metricsfor,Pelillo_98} have been extensively addressed in the literature.

Computationally, GM is at least as hard as GI, which is an NP problem presently not known either to be solvable in polynomial time, or be NP-complete. In fact, GI is one of the few problems which, if P $\ne$ NP, might reside in an intermediate ``GI-complete'' complexity class \cite{fortin1996graph}. 
Yet, the GI problem is known to be only ``moderately exponential'' \cite{babai1981moderately};
furthermore, polynomial (and even linear) time algorithms exist
for checking the isomorphism of various particular types of graphs, such as planar graphs  \cite{hopcroft1974linear},
graphs with bounded vertex degree \cite{luks1982isomorphism}, and trees \cite{aho1974design}.
However, the constants characterizing the complexity of such algorithms are extremely large; for example, the linear time algorithm for checking the isomorphism
of graphs with vertex degree bounded by $2$ is over $2 \times 10^6$! Moreover, these results are largely inapplicable to inexact or weighted graph matching.
Because of this fact, exact graph matching is not used in practical applications involving even moderately-scaled graphs, except for very particular cases.
Instead, various types of heuristics are usually employed.

The common property of heuristic algorithms is that they often perform well on real problems and scale to large graphs at the expense of having no theoretical guarantee to converge to the true global minimizer of the GM problem. The wealth of literature dedicated to graph matching heuristics counts hundreds of studies published in the past four decades, and we will not attempt to review it within the scope of this paper. Instead, we refer the reader to \cite{conte2004thirty} for a comprehensive review, and focus on the popular class of \emph{continuous optimization relaxations}. In these heuristics, the combinatorial GM problem is replaced by an optimization problem with continuous variables, enabling the use of efficient and scalable continuous optimization algorithms \cite{bertsekas1999nonlinear}.

\section{Relaxation of graph matching}

Adopting this perspective, GM can be formulated as an optimization problem
\begin{equation}
\Pib^\ast = \argmin{\Pib \in \Pir} \, \dis_{\Acal\mapsto \Bcal}(\Pib) = \argmin{\Pib \in \Pir} \| \Ab - \Pib^\Tr \Bb \Pib \|.
\label{eq:match}
\end{equation}
The norm in the objective is typically chosen to be the standard $\ell_1$ norm $\textstyle{\| \Xb \|_1 = \sum_{i,j} |x_{ij}|}$,
the Frobenius ($\ell_2$) norm $\textstyle{\| \Xb \|_\mathrm{F}^2 = \sum_{i,j} x_{ij}^2}$, or the min-max ($\ell_\infty$) norm
$\textstyle{\| \Xb \|_\infty = \max_{i,j} |x_{ij}|}$.
In what follows, we will adopt the Frobenius norm, henceforth defining
$$
\dis_{\Acal\mapsto \Bcal}(\Pib) = \| \Ab - \Pib^\Tr \Bb \Pib \|_\mathrm{F}  = \| \Pib \Ab - \Bb \Pib \|_\mathrm{F},
$$
where the second identity is possible due to
unitarity of permutation matrices. For this particular choice, problem \eqref{eq:match} can be rewritten as
\begin{equation}
\Pib^\ast = \argmin{\Pib \in \Pir} \| \Pib \Ab - \Bb \Pib \|_\mathrm{F}^2 = \argmax{\Pib \in \Pir}\,\trace{( \Bb \Pib \Ab \Pib^\Tr )},
\label{eq:qap}
\end{equation}
known as a \emph{quadratic assignment problem} (QAP).

Both \eqref{eq:match} and \eqref{eq:qap} are NP-hard due to the combinatorial complexity of the constraint $\Pib \in \Pir$.
Relaxation techniques reduce this complexity by replacing the latter constraint with a more tractable continuous set.
Since the practically used norms in \eqref{eq:match} can yield a convex minimization objective,
\emph{convex relaxation} techniques consist of replacing $\Pir$ with a larger convex set, resulting in a tractable convex program.
Various techniques differ mainly in the choice of the norm, the choice of the convex set (i.e., the relaxation), and the particular
numerical algorithm used to solve the resulting convex program \cite{schellewald2001evaluation}.

A  popular choice is to relax $\Pir$ to the space of doubly-stochastic matrices
$\Dr = \{ \Pb \ge \bb{0}  : \Pb \ones = \Pb^\Tr \ones = \ones \}$ constituting the convex hull of permutation matrices in $\mathbb{R}^{n \times n}$.
Combined with the $\ell_1$ or the $\ell_\infty$ norms, such a relaxation leads to a linear program \cite{almohamad1993linear}, while the use of the Frobenius norm results in
a linearly-constrained quadratic program (LCQP or QP for short) \cite{vogelstein2011large}. Both types of optimization problems are solvable using polynomial time algorithms, very efficient in practice \cite{bertsekas1999nonlinear}.

Along with convex relaxations of the GM problem \eqref{eq:match}, there exist numerous techniques for relaxing its QAP formulation \eqref{eq:qap}. Note that after the relaxation the two problems are generally not equivalent! Unlike \eqref{eq:match}, the objective of \eqref{eq:qap} is non-convex and hence even if $\Pir$ is replaced by a convex set, the resulting optimization problem is non-convex. One of the most celebrated relaxations of QAP is the \emph{spectral relaxation} \cite{leordeanu2005spectral}, in which the solution matrix is constrained to constant Frobenius norm, which transforms the relaxed problem to the maximum eigenvector problem. The latter is one of the few non-convex optimization problems for which there exists algorithms with global convergence guarantees. Other non-convex relaxations of the QAP  have been proposed, including restricting the matrix $\Pb$ to the non-negative simplex \cite{BuloPelillo11}, or to the space of doubly-stochastic matrices \cite{vogelstein2011large}. All such relaxations have only local convergence guarantees.

In this paper, we consider the convex QP relaxation of GM,
\begin{equation}
\Pb^\ast = \argmin{\Pb \in \Dr} \| \Pb \Ab - \Bb \Pb \|_\mathrm{F}^2.
\label{eq:gm_relaxed}
\end{equation}
In the sequel, we show that the double-stochasticity constraint can be further harmlessly relaxed.
Since the solution $\Pb^\ast$ of the relaxation is, generally, not a permutation matrix, it has to be projected back onto $\Pir$ \cite{conte2004thirty}.
The orthogonal projection $\Pb^\ast$ onto $\Pir$
has to maximize the standard Euclidean inner product, which can be stated as the optimization problem
\begin{equation}
\hat{\Pib} = P_{\Pir} \Pb^\ast = \argmax{\Pib \in \Pir}\, \langle \Pib, \Pb^\ast\rangle = \argmax{\Pib \in \Pir} \, \trace( \Pib^\Tr \Pb^\ast ).
\label{eq:lap}
\end{equation}
This problem is called a \emph{linear assignment problem} (LAP) and, unlike QAP, is solvable in polynomial time using a family of techniques collectively
known as the Hungarian method \cite{kuhn1955hungarian}. LAP can also be formulated and solved as a linear program, in which the linear objective is minimized over the polytope $\Dr$ instead of $\Pir$. The solution of such a linear program is guaranteed to be in $\Pir$ due to a particular property of the constraints called total unimodularity.

The considered convex relaxation of graph matching can be thus summarized as the following two-step procedure, which we henceforth call
the \emph{relaxed GM} or RGM:
\begin{enumerate}
\item Solve QP \eqref{eq:gm_relaxed}.
\item Project the obtained solution onto the space of permutation matrices by solving the LAP \eqref{eq:lap}.
\end{enumerate}
We will henceforth refer to the permutation matrix $\hat{\Pib}$ obtained from step $2$ above
as the solution of the RGM.

Variants of the described procedure are often used in practice; due to their relatively low computational complexity, they scale to large graphs.
There is a considerable practical evidence that the RGM produces a good approximation to the exact solution of the GM problem, in the sense that $\dis(\hat{\Pib}) \approx \dis(\Pib^\ast)$, and often $\hat{\Pib} \approx \Pib^\ast$.
It is therefore astonishing that no theoretical bounds exist on
$| \dis(\hat{\Pib}) - \dis(\Pib^\ast) |$, and practically nothing is known about $\| \hat{\Pib} - \Pib^\ast \|$!
One of the main goals of this paper is to establish conditions under which RGM is \emph{equivalent} to the exact GM, in the sense that the projection of the solution space of \eqref{eq:gm_relaxed} onto $\Pir$ coincides with $\Iso(\Acal \mapsto \Bcal)$. We also investigate conditions for the converse situation, when
the solution space of the relaxation contains non-zero distortion permutations, making the relaxation unusable.

\section{Exact matching of asymmetric graphs}

We start with the case of exact (i.e., distortion-less) matching of asymmetric graphs.
An undirected graph $\Acal$ with the adjacency matrix $\Ab$ is said to possess a \emph{symmetry} $\Pib \in \Pir$ if $\dis_{\Acal\mapsto \Acal}(\Pib) = 0$. This notation
emphasizes that symmetries are self-isomorphisms. Symmetries form a group with the matrix multiplication operation (or with the function composition, if permutations are interpreted as bijective functions), which we refer to as the \emph{symmetry group} (a.k.a.  \emph{automorphism group}) of $\Acal$ and denote by $\Sym\, \Acal$.
The graph is called \emph{asymmetric} if its symmetry group is trivial, $\Sym\, \Acal = \{ \Ib \}$.

It is straightforward to show that two isomorphic graphs $\Acal$ and $\Bcal$ have identical (isomorphic) symmetry groups, and if $\Pib \in \Pir$ is an isomorphism,
then $\Pib \, \Sym^\Tr\Acal = \{ \Pib \Pib^{\prime \Tr} : \Pib^{\prime} \in \Sym\, \Acal \}$ (or, equivalently, $\Sym\,\Bcal \, \Pib$) are also isomorphisms.
The converse is also true: if the two graphs are related by a collection of isomorphisms $\Iso(\Acal \mapsto \Bcal) = \{ \Pib_1, \dots, \Pib_k  \in \Pir \}$,
then they are symmetric with $\Sym\,\Acal$ generated by $\Iso(\Bcal \mapsto \Acal) \circ \Iso(\Acal \mapsto \Bcal) = \{ \Pib_i^\Tr \Pib_j \}$, and
$\Sym\,\Bcal$ by $\Iso(\Acal\mapsto \Bcal) \circ \Iso(\Bcal\mapsto \Acal) = \{ \Pib_i \Pib_j^\Tr \}$.
Consequently, if $\Acal$ is asymmetric and $\Bcal$ is isomorphic to it, they are related by a unique isomorphism which is the global minimizer of \eqref{eq:match}.
In what follows, we denote this unique isomorphism by $\Pib^\ast$.

We emphasize that the symmetry or asymmetry of a graph has nothing to do with the fact that the adjacency matrix $\Ab$ is symmetric.
The latter property stems from the fact that the graph is undirected, and because of it $\Ab$ admits unitary diagonalization of the form $\Ab = \Ub \Lambdab \Ub^\Tr$,
with an orthonormal $\Ub = (\ub_1,\dots, \ub_n)$ containing the eigenvectors in its columns, and a diagonal $\Lambdab = \diag\{\lambda_1,\dots,\lambda_n \}$ containing
the corresponding eigenvalues.

The uniqueness of the isomorphism relating isomorphic asymmetric graphs is crucial for the results we present next.
However, the existence or the absence of symmetry is not an easy property to verify.
To overcome this difficulty, instead of considering the class of asymmetric graphs, we consider another class of graphs
characterized by the following spectral property:

\begin{definition}
A graph $\Acal$ is called \emph{friendly} if its adjacency matrix $\Ab$ has simple spectrum (i.e., all the $\lambda_i$ are distinct),
and all its eigenvectors satisfy $\ub_i^\Tr \bb{1} \ne 0$.
\label{def:friendly}
\end{definition}

\noindent We note the following important consequence of friendliness:
\begin{lemma}
A friendly graph is asymmetric.
\label{lemma:asymmetry}
\end{lemma}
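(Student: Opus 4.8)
The plan is to translate the symmetry condition into a commutation relation and then exploit the two defining properties of friendliness in turn: the simple spectrum to constrain how a symmetry can act on the eigenvectors, and the non-orthogonality $\ub_i^\Tr\ones\ne 0$ to resolve the remaining sign ambiguity. First I would observe that, under the Frobenius norm, $\dis_{\Acal\mapsto\Acal}(\Pib)=\|\Pib\Ab-\Ab\Pib\|_\mathrm{F}$, so a symmetry of $\Acal$ is exactly a permutation matrix $\Pib$ commuting with its adjacency matrix, $\Pib\Ab=\Ab\Pib$. Proving $\Sym\,\Acal=\{\Ib\}$ thus amounts to showing that the only permutation matrix commuting with a friendly $\Ab$ is the identity.

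Next I would analyze how such a $\Pib$ acts on the eigenbasis. For each eigenpair $(\lambda_i,\ub_i)$, commutation gives $\Ab(\Pib\ub_i)=\Pib\Ab\ub_i=\lambda_i(\Pib\ub_i)$, so $\Pib\ub_i$ is again an eigenvector for $\lambda_i$. Because the spectrum is simple, the $\lambda_i$-eigenspace is one-dimensional, forcing $\Pib\ub_i=c_i\ub_i$ for some scalar $c_i$; since $\Pib$ is orthogonal it preserves norms, and since $\Pib$ and $\ub_i$ are real, we get $c_i\in\{+1,-1\}$.

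The crucial step is to eliminate the possibility $c_i=-1$, and this is exactly where the second friendliness condition enters. A permutation matrix fixes the all-ones vector, $\Pib\ones=\Pib^\Tr\ones=\ones$. Evaluating $\langle\Pib\ub_i,\ones\rangle$ in two ways, I would write it on one hand as $c_i\,\ub_i^\Tr\ones$ and on the other as $\ub_i^\Tr\Pib^\Tr\ones=\ub_i^\Tr\ones$; equating the two gives $(c_i-1)\,\ub_i^\Tr\ones=0$. Friendliness guarantees $\ub_i^\Tr\ones\ne 0$, hence $c_i=1$ for every $i$. Thus $\Pib$ fixes each $\ub_i$, and since the eigenvectors form an orthonormal basis of $\RR^n$, we conclude $\Pib=\Ib$ and $\Sym\,\Acal=\{\Ib\}$.

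No single hypothesis suffices in isolation, which is the only real subtlety: a matrix with simple spectrum may still admit a nontrivial symmetry that negates an eigenvector orthogonal to $\ones$, and it is precisely such eigenvectors that friendliness forbids. The argument is therefore short, and the point that deserves the most care is the justification that $c_i$ is real with $|c_i|=1$ — that is, that one may genuinely reduce to the sign $\pm 1$ before invoking the $\ones$-probe.
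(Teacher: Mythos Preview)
Your proof is correct and follows essentially the same approach as the paper: both use the simple spectrum to force $\Pib\ub_i=\pm\ub_i$, then the permutation property $\Pib^\Tr\ones=\ones$ together with $\ub_i^\Tr\ones\ne 0$ to rule out the minus sign. The only cosmetic difference is that the paper phrases it as a proof by contradiction (assuming $\Pib\ne\Ib$ forces some $c_i=-1$, hence $\ub_i^\Tr\ones=0$), whereas you argue directly that every $c_i=1$; your version is slightly more explicit about why $c_i\in\{\pm1\}$.
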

\begin{proof}
Let $\Ab = \Ub\Lambdab\Ub^\Tr$ denote the adjacency matrix of the graph,
and let assume by contradiction that there exists $\Pib \ne \Ib$ such that $\Pib \Ab = \Ab \Pib$.
Then, for every eigenvector $\ub_i$ of $\Ab$, we have $\Ab \Pib \ub_i = \Pib \Ab \ub_i = \lambda_i \Pib \ub_i$, that is, $\Pib \ub_i$ is also an eigenvector
of $\Ab$. Since, due to friendliness, $\Ab$ has simple spectrum, the only two possibilities are $\Pib \ub_i = \pm \ub_i$.
Since we assumed $\Pib \ne \Ib$, there must be at least one eigenvector $\ub_i$ for which $\Pib \ub_i = - \ub_i$. Then, $\ones^\Tr \Pib \ub_i = -\ones^\Tr \ub_i$.
On the other hand, since $\Pib$ is a permutation matrix, $\ones^\Tr \Pib \ub_i = \ones^\Tr \ub_i$. Hence, $\ub_i^\Tr \ones = \bb{0}$ in contradiction to friendliness of $\Ab$.
\end{proof}

\noindent
The converse is not true, as there might exist an asymmetric graph with an unfriendly adjacency matrix. For example, any regular unweighted graph has a constant eigenvector and is thus highly unfriendly; on the other hand, there exist asymmetric regular graphs such as the Frucht graph with $n=12$. 
However, unfriendliness still seems to be a singular property,
and intuition suggests that unfriendly graphs should have measure zero among random asymmetric weighted graphs, and
the class of friendly graphs should be
almost as big as that of asymmetric graphs. We do not pursue the rigorous proof
of this claim, since it might delicately depend on what is meant by ``random''.
We only emphasize that, in contrast to asymmetry, friendliness is an easily verifiable property.

Using the notion of friendliness, we state our first result:
\begin{theorem}
Let $\Acal$ and $\Bcal$ be friendly isomorphic graphs. Then, GM and RGM are equivalent.
\label{thm:asym:iso}
\end{theorem}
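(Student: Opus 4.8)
The plan is to show that the convex program \eqref{eq:gm_relaxed} already has $\Pib^\ast$ as its \emph{unique} minimizer, after which Step~2 of RGM becomes trivial. Since $\Acal$ and $\Bcal$ are isomorphic, the isomorphism $\Pib^\ast$ (unique by Lemma~\ref{lemma:asymmetry}, as friendliness implies asymmetry) lies in $\Pir\subset\Dr$ and achieves $\dis(\Pib^\ast)=0$, so the minimum value of the convex objective $\|\Pb\Ab-\Bb\Pb\|_\mathrm{F}^2$ over $\Dr$ is zero. Hence the solution set of \eqref{eq:gm_relaxed} is exactly $\{\Pb\in\Dr : \Pb\Ab=\Bb\Pb\}$, and the whole theorem reduces to proving that this set is the singleton $\{\Pib^\ast\}$.

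First I would characterize the linear space of all intertwining matrices $\{\Pb : \Pb\Ab=\Bb\Pb\}$, temporarily ignoring the stochasticity constraints. Writing $\Ab=\Ub\Lambdab\Ub^\Tr$ and using $\Bb=\Pib^\ast\Ab(\Pib^\ast)^\Tr$, the columns of $\Pib^\ast\Ub$ are eigenvectors of $\Bb$ with the same eigenvalues. Substituting these decompositions into $\Pb\Ab=\Bb\Pb$ and setting $\Qb=\Ub^\Tr(\Pib^\ast)^\Tr\Pb\Ub$, the relation collapses to $\Qb\Lambdab=\Lambdab\Qb$. Because friendliness guarantees a simple spectrum, commuting with $\Lambdab$ forces $\Qb=\diag(q_1,\dots,q_n)$ to be diagonal. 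Unwinding the change of variables, every intertwining matrix has the form $\Pb=\sum_i q_i\,\vb_i\ub_i^\Tr$, where $\vb_i=\Pib^\ast\ub_i$; geometrically, $\Pb$ is free to rescale $\Ab$'s one-dimensional eigenspaces independently before routing them through $\Pib^\ast$.

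It then remains to impose feasibility and pin down the coefficients $q_i$, which is where the second half of friendliness enters. Imposing only the single row-sum constraint $\Pb\ones=\ones$ yields $\sum_i q_i(\ub_i^\Tr\ones)\vb_i=\ones$. Expanding $\ones$ in the orthonormal basis $\{\vb_i\}$ and using that a permutation fixes $\ones$, so that $\vb_i^\Tr\ones=\ub_i^\Tr\ones$, I can match coefficients to get $q_i(\ub_i^\Tr\ones)=\ub_i^\Tr\ones$ for every $i$. Friendliness supplies $\ub_i^\Tr\ones\neq0$, which cancels to force $q_i=1$ for all $i$, whence $\Pb=\sum_i\vb_i\ub_i^\Tr=\Pib^\ast\Ub\Ub^\Tr=\Pib^\ast$. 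Thus the zero-distortion set intersected with $\Dr$ is exactly $\{\Pib^\ast\}$; its projection \eqref{eq:lap} onto $\Pir$ returns $\Pib^\ast$ itself, so $\hat{\Pib}=\Pib^\ast=\Iso(\Acal\mapsto\Bcal)$ and RGM coincides with GM.

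I expect the crux to be the coefficient-matching step rather than the eigendecomposition: the simple-spectrum half of friendliness is a standard route to diagonalizing $\Qb$, but it is the condition $\ub_i^\Tr\ones\neq0$ that is doing the real work, since a single vanishing $\ub_i^\Tr\ones$ would leave the corresponding $q_i$ unconstrained and spawn a whole family of non-permutation doubly-stochastic minimizers. I would also flag, as a free byproduct, that only the $n$ constraints $\Pb\ones=\ones$ were used---neither $\Pb^\Tr\ones=\ones$ nor $\Pb\ge\bb{0}$---which is precisely the harmless further relaxation to pseudo-stochastic matrices promised in the introduction.
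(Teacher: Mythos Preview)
Your argument is correct and reaches the same conclusion through essentially the same ingredients---the simple-spectrum half of friendliness to force a diagonal matrix in the eigenbasis, and the row-sum constraint together with $\ub_i^\Tr\ones\ne 0$ to pin the diagonal to the identity---but the route is noticeably different from the paper's.

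The paper does not exploit the fact that the minimum of the convex objective is zero. Instead it writes down the Lagrangian of the quadratic program with the pseudo-stochasticity constraint, derives the first-order optimality condition $\Qb\Bb^2+\Bb^2\Qb-2\Bb\Qb\Bb+\alphab\ones^\Tr=0$, passes to the eigenbasis to obtain the coordinate-wise system $F_{ij}(\lambda_i-\lambda_j)^2+v_j\gamma_i=0$, and then eliminates the Lagrange multipliers using $v_i\ne 0$. Your approach bypasses all of this by noting that the optimal set is exactly the affine intertwining set $\{\Pb:\Pb\Ab=\Bb\Pb\}$, which you characterize directly; this is cleaner and more elementary for the exact case. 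What the paper's Lagrangian route buys is that the resulting linear system \eqref{eqn:eqn_F} is precisely the object that gets perturbed in the proof of Lemma~\ref{lemma:stability} for the inexact case, where the minimum is no longer zero and your ``characterize the zero set'' shortcut is unavailable. So your proof is arguably better for Theorem~\ref{thm:asym:iso} in isolation, while the paper's is the natural setup for the stability analysis that follows. Your closing remark that only $\Pb\ones=\ones$ is used matches the paper's observation leading to the pseudo-stochastic relaxation \eqref{eq:gm_relaxed1}.
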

\begin{proof}
We consider the relaxation \eqref{eq:gm_relaxed} of GM, denoting by $\Pib^\ast$ the global minimizer of the latter. The minimizer is unique
due to asymmetry.
For any doubly-stochastic matrix $\Pb$,
\begin{equation*}
\Pb\Ab-\Bb\Pb  = (\Pb{\Pib^{\ast\Tr}}\Bb-\Bb\Pb{\Pib^{\ast\Tr}})\Pib^\ast =  \Qb\Bb-\Bb\Qb,
\end{equation*}
where $\Qb=\Pb {\Pib^{\ast\Tr}}$. We therefore reformulate \eqref{eq:gm_relaxed} in terms of $\Qb$ as the minimization of
$f(\Qb)=\frac{1}{2}\|\Qb\Bb-\Bb\Qb\|_\mathrm{F}^2$ w.r.t $\Qb \in \Dr$.
Since the objective $f(\Qb)$ is convex in $\Qb$, and so is the set of double stochastic matrices $\Dr$, the problem  has a global minimum at $\Qb = \Ib$.
It remains to prove  that the minimum is unique.
Since $\Bb$ is symmetric, simple calculus yields the gradient of $f(\Qb)$,
$\nabla_{\Qb} f =\Qb\Bb^2+\Bb^2\Qb-2\Bb\Qb\Bb$.
By omitting the nonnegativity and unit column sum constraints, we further relax the constraint $\Qb\in \Dr$ to $\Qb\ones=\ones$, referring to such matrices as \emph{pseudo-stochastic}.
The Lagrangian of $f$ with the pseudo-stochasticity constraint on $\Qb$ is given by
$L(\Qb,\alphab) = f(\Qb) + \alphab^\Tr(\Qb\ones-\ones) = f(\Qb) + \trace\left((\Qb\ones-\ones)\alphab^\Tr\right)$,
with $\alphab$ denoting the vector of Lagrange multipliers.
Problem \eqref{eq:gm_relaxed} reaches a minimum when
\begin{equation*}
\nabla_{\Qb} L(\Qb,\alphab) = \Qb\Bb^2+\Bb^2\Qb-2\Bb\Qb\Bb+\alphab\ones^\Tr  = 0.
\end{equation*}
Substituting the unitary eigendecomposition $\Bb = \Ub \Lambdab \Ub^\Tr$, the latter equation can be rewritten as
\begin{equation}
\Fb\Lambdab^2+\Lambdab^2\Fb-2\Lambdab\Fb\Lambdab+\gammab \vb^\Tr=0,
\label{eq:F}
\end{equation}
where $\gammab=\Ub^\Tr\alphab$, $\vb = \Ub^\Tr \ones$, and $\Fb=\Ub^\Tr\Qb\Ub$.
It is easy to see that \eqref{eq:F} can be expressed coordinate-wise as
\begin{equation}
\label{eqn:eqn_F_}
F_{ij}(\lambda_i-\lambda_j)^2+v_j\gamma_i=0.
\end{equation}
For every $i=j$, we have $v_i \gamma_i=0$; since the friendliness assumption implies $v_i\neq 0$ for all $i$, we have $\gammab=\bb{0}$.
This yields
\begin{equation}
\label{eqn:eqn_F}
F_{ij}(\lambda_i-\lambda_j)^2=0 \,\,\,\mathrm{for\, every}\, i\neq j.
\end{equation}
Since friendliness also implies $\lambda_i \ne \lambda_j$, $\Fb$ must be diagonal.
Because $\Qb$ is pseudo-stochastic, it has to satisfy $\ones=\Qb\ones = \Ub\Fb\Ub^\Tr\ones$ or, equivalently, $\vb = \Fb \vb$.
Yet, since $\Fb$ is diagonal and, due to friendliness, $\vb$ has no zero elements, the above property is satisfied only if
$\Fb=\Ib$. This implies that $\Qb=\Ib$ or, equivalently, $\Pb=\Pib^\ast$. Hence, $\Pib^\ast$ is the unique minimizer of
\eqref{eq:gm_relaxed}. Since the solution is already in $\Pir$, the projection \eqref{eq:lap} leaves it unchanged.
%
\end{proof}

Note that in the proof we only used the pseudo-stochasticity constraint $\Pb \bb{1} = \bb{1}$.
This leads to an important consequence: instead of relaxing $\Pir$ to the space $\Dr$ of doubly-stochastic matrices, a coarser relaxation to
pseudo-stochastic matrices is equivalent in the discussed case. Practically, this means that we can solve a simpler quadratic program
\begin{equation}
\Pb^\ast = \argmin{\Pb} \| \Pb \Ab - \Bb \Pb \|_\mathrm{F}^2 \st \bb{P}\bb{1} = \bb{1},
\label{eq:gm_relaxed1}
\end{equation}
with $n^2$ variables and only $n$ equality constraints, instead of $2n$ equality constraints and $n^2$ inequality constraints in \eqref{eq:gm_relaxed}.
In what follows, we focus on this simpler convex relaxation instead of \eqref{eq:gm_relaxed} in the RGM.

While checking the friendliness condition in Theorem~\ref{thm:asym:iso} is straightforward,
checking whether the perfect isomorphism condition is satisfied is not (in fact, it is a graph isomorphism problem!)
However, in practice one can simply solve relaxation \eqref{eq:gm_relaxed1} for the two friendly graphs, project the result onto $\Pir$, and check whether
$\dis (\hat{\Pib}) = 0$. If the answer is positive, $\hat{\Pib}$ is guaranteed to be the unique isomorphism; otherwise, the theorem guarantees that the graphs are not isomorphic.



\section{Inexact matching of asymmetric graphs}

The case of perfectly isomorphic graphs, to which Theorem~\ref{thm:asym:iso} is applicable, is often an unachievable mathematical idealization.
Many practical applications of graph matching assume some amount of noise, and seek a least distortion correspondence rather than a perfect isomorphism. To formalize this notion,
we say that two graphs
$\Acal$ and $\Bcal$ are $\rho$-isomorphic if there exists $\Pib^\ast \in \Pir$ with $\dis(\Pib^\ast) \le \rho$.

%
Similarly, we say that a graph $\Acal$ possesses an \emph{$\rho$-symmetry} $\Pib \in \Pir$ if $\dis_{\Acal\mapsto \Acal}(\Pib) \le \rho$.
Note that unlike their exact counterparts, $\rho$-symmetries do not form a group, as the composition of two $\rho$-symmetries might have $\dis > \rho$.
A graph with a trivial $\rho$-symmetry set is called \emph{$\rho$-asymmetric}.
The lack of symmetry of such a graph is strong enough to guarantee that a bounded perturbation of the adjacency weights does not produce new,
previously inexistent symmetries.

In order to generalize our result to the case of nearly-isomorphic graphs, we define the strength of a graph's friendliness:
\begin{definition}
A graph $\Acal$ is $(\epsilon,\delta)$-\emph{friendly} if its adjacency matrix $\Ab = \Ub \Lambdab \Ub^\Tr$ has the
spectral gap $\displaystyle{\sigma(\Ab) = \min_{i\ne j} |\lambda_i - \lambda_j| > \delta}$, and
$\displaystyle{\epsilon < |\ub_i^\Tr \bb{1}| < \frac{1}{\epsilon}}$ for $i=1,\dots,n$.
\label{def:eps-friendly}
\end{definition}
\noindent
Also note that our former definition of friendliness corresponds to $(\epsilon,\delta) = (0,0)$.
We refer to the case $\epsilon,\delta > 0$ as \emph{strong} friendliness.

For the broad family of strongly friendly graphs, we first show that the result of Theorem~\ref{thm:asym:iso} is stable in the sense that a bounded perturbation
in the adjacency matrix results in a bounded perturbation of the solution:
\begin{lemma}
Let $\Acal$ and $\Bcal$ be $(\epsilon,\delta)$-friendly isomorphic graphs with spectral radius $\displaystyle{\specrad = \max_{i} |\lambda_i|}$,
related by the unique isomorphism $\Pib^\ast$.
Let $\tilde{\Bcal}$ be a perturbed version of $\Bcal$ with $\tilde{\Bb} = \Bb + \rho \Rb$, where $\Rb$ is symmetric with $\|\Rb\|_\mathrm{F} \le 1$, and $\displaystyle{ \rho  \le \min\left\{ \sqrt{2}\specrad,
\frac{\delta^2 \epsilon^4}{12  \specrad n^{1.5}}
\right\} }$.
Then, the solution $\Pb^\ast$ of the perturbed problem \eqref{eq:gm_relaxed1} is unique and satisfies $\| \Pb_\rho^\ast - \Pib^\ast \|_\mathrm{F} < \textstyle{\frac{1}{2}}$.
\label{lemma:stability}
\end{lemma}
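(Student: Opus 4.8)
The plan is to reduce the perturbed problem to a convex quadratic in a shifted variable, bound how far its minimizer can travel from the identity, and then read off uniqueness from strong convexity on the feasible affine set. First I would repeat the change of variables from the proof of Theorem~\ref{thm:asym:iso}. Since $\Pib^\ast\Ab = \Bb\Pib^\ast$, writing $\Qb = \Pb\Pib^{\ast\Tr}$ turns the perturbed objective into
\[
\tilde f(\Qb) = \tfrac12\|\Qb\Bb - \tilde{\Bb}\Qb\|_\mathrm{F}^2 ,
\]
which is convex in $\Qb$, while the pseudo-stochasticity constraint becomes $\Qb\ones = \ones$, for which $\Qb = \Ib$ is feasible. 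Because $\tilde f(\Ib) = \tfrac12\|\Bb - \tilde{\Bb}\|_\mathrm{F}^2 = \tfrac12\rho^2\|\Rb\|_\mathrm{F}^2 \le \tfrac12\rho^2$, optimality of the minimizer $\Qb^\ast$ yields $\tilde f(\Qb^\ast) \le \tfrac12\rho^2$. As $\Pib^\ast$ is unitary, $\|\Pb^\ast - \Pib^\ast\|_\mathrm{F} = \|\Qb^\ast - \Ib\|_\mathrm{F}$, so it suffices to bound $\Hb := \Qb^\ast - \Ib$, which satisfies $\Hb\ones = \bb{0}$.

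Next I would split the control of $\Hb$ into an off-diagonal and a diagonal part in the eigenbasis of $\Bb$. Expanding the residual gives $\Qb^\ast\Bb - \tilde{\Bb}\Qb^\ast = (\Hb\Bb - \Bb\Hb) - \rho\Rb(\Ib+\Hb)$, so from $\tilde f(\Qb^\ast)\le\tfrac12\rho^2$ and the reverse triangle inequality, bounding $\|\Rb(\Ib+\Hb)\|_\mathrm{F}\le\|\Rb\|_\mathrm{F}+\|\Hb\|_\mathrm{F}\le 1+\|\Hb\|_\mathrm{F}$, I obtain $\|\Hb\Bb - \Bb\Hb\|_\mathrm{F} \le \rho(2 + \|\Hb\|_\mathrm{F})$. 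Setting $\Gb = \Ub^\Tr\Hb\Ub$ with diagonal part $\Gb_\mathrm{d}$ and off-diagonal part $\Gb_\mathrm{o}$, the gap condition $|\lambda_i-\lambda_j|>\delta$ gives $\|\Hb\Bb - \Bb\Hb\|_\mathrm{F}^2 = \sum_{i\ne j}(\lambda_i-\lambda_j)^2 G_{ij}^2 \ge \delta^2\|\Gb_\mathrm{o}\|_\mathrm{F}^2$, so the spectral gap controls the off-diagonal part. The diagonal part is pinned down by the constraint: $\Hb\ones=\bb{0}$ reads $\Gb\vb=\bb{0}$ with $\vb=\Ub^\Tr\ones$, whence $|v_i|\,|G_{ii}| = |(\Gb_\mathrm{o}\vb)_i|$, and since friendliness gives $|v_i|>\epsilon$ while $\|\vb\|^2 = n$, one gets $\|\Gb_\mathrm{d}\|_\mathrm{F} \le \tfrac{\sqrt n}{\epsilon}\|\Gb_\mathrm{o}\|_\mathrm{F}$. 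Hence $\|\Hb\|_\mathrm{F} \le C\|\Gb_\mathrm{o}\|_\mathrm{F}$ with $C = \sqrt{1+n/\epsilon^2}$.

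Combining these bounds, $\delta\|\Gb_\mathrm{o}\|_\mathrm{F} \le \rho(2 + C\|\Gb_\mathrm{o}\|_\mathrm{F})$; imposing $\rho C \le \delta/2$ absorbs the last term and gives $\|\Gb_\mathrm{o}\|_\mathrm{F} \le 4\rho/\delta$, hence $\|\Hb\|_\mathrm{F} \le 4C\rho/\delta$, and taking $\rho$ in the stated range forces $\|\Hb\|_\mathrm{F} < \tfrac12$. Uniqueness comes from the same estimates applied to the Hessian of $\tilde f$: on the tangent space $\{\Hb\ones=\bb{0}\}$ one has $\|\Hb\Bb - \tilde{\Bb}\Hb\|_\mathrm{F} \ge \|\Hb\Bb-\Bb\Hb\|_\mathrm{F} - \rho\|\Hb\|_\mathrm{F} \ge (\delta/C - \rho)\|\Hb\|_\mathrm{F}$, which is strictly positive in the admissible range of $\rho$, so $\tilde f$ is strongly convex on the feasible affine set and its minimizer is unique.

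The main obstacle is turning these order-of-magnitude estimates into the explicit constant $\rho \le \delta^2\epsilon^4/(12\specrad\, n^{1.5})$: the inequality for $\|\Gb_\mathrm{o}\|_\mathrm{F}$ is self-referential, with $\|\Hb\|_\mathrm{F}$ appearing on both sides, so one must choose $\rho$ small enough to absorb it cleanly, and the constant-chasing is delicate. I expect the quoted dependence on the spectral radius $\specrad$ and the higher powers of $\epsilon$ and $\delta$ to arise from the more conservative route that mirrors the proof of Theorem~\ref{thm:asym:iso} by diagonalizing the \emph{perturbed} matrix $\tilde{\Bb}$ directly: there one must first transfer friendliness from $\Bb$ to $\tilde{\Bb}$, using Weyl's inequality to keep the eigenvalue gap positive (requiring $\rho$ below a fraction of $\delta$, and $\rho\le\sqrt{2}\specrad$ to keep the spectrum controlled) together with a Davis--Kahan type eigenvector bound to guarantee $\tilde{\ub}_i^\Tr\ones\neq 0$, and it is precisely this eigenvector-perturbation step, scaled by $\specrad$ and the inverse gap, that I expect to produce the stated constant.
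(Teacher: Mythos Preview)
Your argument is correct and is a genuinely different, cleaner route than the paper's. The paper does \emph{not} diagonalize the perturbed matrix $\tilde{\Bb}$ or invoke Weyl/Davis--Kahan as you conjecture; it diagonalizes the unperturbed $\Bb$, writes the first-order KKT conditions of the Lagrangian as a perturbed linear system $(\Mb+\rho\Nb)\fb=\cb$ in the row-stacked variable $\Fb=\Ub^\Tr\Qb\Ub$, and then applies a standard linear-system perturbation bound (Lemma~\ref{lemma:perturbation}) together with explicit estimates of $\|\Mb^{-1}\|$ and $\|\Nb\|$. The $\specrad$ factor and the higher powers $\delta^2\epsilon^4$ in the stated constant come from bounding the entries $s^i_{jk}$, which contain ratios such as $(\lambda_j+\lambda_k-2\lambda_i)/(\lambda_i-\lambda_j)^2$ and $v_j/v_i$; the extra hypothesis $\rho\le\sqrt{2}\specrad$ is used only to absorb the second-order term $\rho^2 t^i_{jk}$ into the first-order one.

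By contrast, your variational argument---bounding the optimal value by $\tilde f(\Ib)$, splitting $\Gb=\Ub^\Tr\Hb\Ub$ into a gap-controlled off-diagonal part and a constraint-controlled diagonal part, and closing the self-referential inequality---never sees the ratios $v_j/v_i$ or the spectral radius, and yields a requirement of order $\rho\lesssim \delta\epsilon/\sqrt{n}$ rather than $\delta^2\epsilon^4/(\specrad\,n^{1.5})$. Since (using $\delta\le 2\specrad$ and $\epsilon\le 1$, both implicit in the setting) the paper's hypothesis on $\rho$ implies yours, your argument does prove the lemma as stated; you should, however, spell out this implication rather than leave ``taking $\rho$ in the stated range'' as an assertion. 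Your uniqueness step via strong convexity on the affine constraint set is also sound and more transparent than the paper's invertibility of $\Mb+\rho\Nb$. The only part to drop is the final paragraph's speculation about the paper's method, which is off target.
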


\noindent The proof closely follows the proof of Theorem~\ref{thm:asym:iso}, and relies on
a result from perturbation analysis of linear systems. Full proof as well as the mentioned result (summarized as Lemma~\ref{lemma:perturbation}) are presented in the Appendix.

Applying the former result to matching a graph with itself ($\Acal = \Bcal$), the following generalization of Lemma~\ref{lemma:asymmetry} can be straightforwardly obtained:
\begin{corollary}
\label{cor:strong-asymm}
An $(\epsilon,\delta)$-friendly graph is $\rho$-asymmetric, with $\rho$ satisfying the conditions of Lemma~\ref{lemma:stability}.
\end{corollary}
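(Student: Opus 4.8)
The plan is to apply Lemma~\ref{lemma:stability} with the graph matched against itself. Set $\Bcal = \Acal$. Since an $(\epsilon,\delta)$-friendly graph is friendly and therefore asymmetric by Lemma~\ref{lemma:asymmetry}, the unique isomorphism relating $\Acal$ to itself is $\Pib^\ast = \Ib$. I would then assume, towards a contradiction, that $\Acal$ possesses a nontrivial $\rho$-symmetry $\Pib \neq \Ib$, i.e. $\dis_{\Acal\mapsto\Acal}(\Pib) = \|\Ab - \Pib^\Tr\Ab\Pib\|_\mathrm{F} \le \rho$. The guiding idea is to read this $\rho$-symmetry as a $\rho$-bounded perturbation of $\Acal$ that happens to be \emph{exactly} isomorphic to $\Acal$: the stability estimate then forbids the relaxed solution from straying far from $\Ib$, while exactness forces it to equal $\Pib^\Tr$, and these two facts are incompatible unless $\Pib=\Ib$.

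Concretely, define the perturbed graph $\tilde{\Bcal}$ by $\tilde{\Bb} = \Pib^\Tr\Ab\Pib$, which is symmetric. Writing $\tilde{\Bb} = \Ab + \rho\Rb$ with $\Rb = \rho^{-1}(\Pib^\Tr\Ab\Pib - \Ab)$, the $\rho$-symmetry hypothesis is precisely the statement that $\Rb$ is symmetric with $\|\Rb\|_\mathrm{F} \le 1$. Since $\rho$ is assumed to meet the bound of Lemma~\ref{lemma:stability}, and $\Acal$ together with $\Bcal=\Acal$ are $(\epsilon,\delta)$-friendly and (trivially) isomorphic via $\Ib$, the lemma applies verbatim — note that it only constrains the unperturbed pair, not $\tilde{\Bcal}$ — and yields that the solution $\Pb_\rho^\ast$ of the perturbed relaxation \eqref{eq:gm_relaxed1} is unique and satisfies $\|\Pb_\rho^\ast - \Ib\|_\mathrm{F} < \frac{1}{2}$.

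It remains to identify $\Pb_\rho^\ast$ explicitly. Because $\tilde{\Bb}\Pib^\Tr = \Pib^\Tr\Ab\Pib\Pib^\Tr = \Pib^\Tr\Ab$, the permutation $\Pib^\Tr$ is feasible for \eqref{eq:gm_relaxed1} and attains the value $\|\Pib^\Tr\Ab - \tilde{\Bb}\Pib^\Tr\|_\mathrm{F} = 0$; being a global minimizer of the nonnegative objective, it must coincide with the unique minimizer, so $\Pb_\rho^\ast = \Pib^\Tr$. Combining with the stability bound gives $\|\Pib^\Tr - \Ib\|_\mathrm{F} < \frac{1}{2}$. But a permutation matrix other than $\Ib$ differs from $\Ib$ in at least two rows, each contributing $2$ to the squared Frobenius norm, so $\|\Pib^\Tr - \Ib\|_\mathrm{F} \ge 2$. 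This forces $\Pib^\Tr = \Ib$, i.e. $\Pib = \Ib$, contradicting $\Pib \neq \Ib$. Hence $\Acal$ admits no nontrivial $\rho$-symmetry and is $\rho$-asymmetric.

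I expect the only point requiring care to be the bookkeeping that makes Lemma~\ref{lemma:stability} applicable without modification: verifying that the self-matching perturbation $\tilde{\Bb} = \Pib^\Tr\Ab\Pib$ is symmetric with $\|\Rb\|_\mathrm{F} \le 1$, and that the unperturbed isomorphism is indeed $\Ib$. Once this identification is in place, the wide quantitative gap between the stability bound $\frac{1}{2}$ and the minimal separation $2$ of distinct permutation matrices closes the argument with room to spare, leaving no remaining analytic difficulty.
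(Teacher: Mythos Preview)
Your proof is correct and is exactly the argument the paper intends: the paper offers no detailed proof, only the one-line remark ``Applying the former result to matching a graph with itself ($\Acal = \Bcal$), the following generalization of Lemma~\ref{lemma:asymmetry} can be straightforwardly obtained,'' and your write-up is a faithful and careful unpacking of that hint. The bookkeeping you flag (symmetry of $\tilde{\Bb}=\Pib^\Tr\Ab\Pib$, the bound $\|\Rb\|_\mathrm{F}\le 1$, and the identification $\Pib^\ast=\Ib$) is all in order.
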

\noindent In fact, this property guarantees that the perturbation creates no symmetries and, thus,
the perturbed version of system \eqref{eqn:eqn_F} remains full rank.

The stability of the relaxation in Lemma~\ref{lemma:stability} leads directly to our second result:
\begin{theorem}
Let $\Acal$ be an $(\epsilon,\delta)$-friendly graph with the adjacency matrix
normalized such that $\specrad = 1$, and let $\Bcal$ be 
$\rho$-isomorphic to $\Acal$.
Then, if $\rho < \frac{\delta^2 \epsilon^4}{12 n^{1.5}}$, RGM and GM are equivalent.
\label{thm:asym:eps-iso}
\end{theorem}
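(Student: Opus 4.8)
The plan is to leverage Lemma~\ref{lemma:stability} to control the solution of the relaxed problem, and then argue that the projection onto $\Pir$ recovers the true optimal permutation. Since $\Bcal$ is $\rho$-isomorphic to $\Acal$, there exists a permutation $\Pib^\ast$ with $\dis(\Pib^\ast) \le \rho$; equivalently, the adjacency matrix of $\Bcal$ can be written, up to relabeling by $\Pib^\ast$, as a perturbation $\tilde{\Ab} = \Ab + \rho \Rb$ of $\Ab$ with $\Rb$ symmetric and $\|\Rb\|_\mathrm{F} \le 1$. This casts the present situation into exactly the setting of Lemma~\ref{lemma:stability} with the roles of the two graphs suitably identified. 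The normalization $\specrad = 1$ simplifies the bound on $\rho$ in that lemma: the first branch $\sqrt{2}\specrad = \sqrt{2}$ of the minimum is dominated by the second branch $\frac{\delta^2 \epsilon^4}{12 n^{1.5}}$ whenever $\delta, \epsilon$ are small, so the hypothesis $\rho < \frac{\delta^2 \epsilon^4}{12 n^{1.5}}$ of the theorem matches the admissible perturbation range of the lemma.

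First, I would invoke Lemma~\ref{lemma:stability} to conclude that the solution $\Pb^\ast$ of the relaxed problem \eqref{eq:gm_relaxed1} is unique and satisfies $\| \Pb^\ast - \Pib^\ast \|_\mathrm{F} < \frac{1}{2}$. Next, I would analyze the projection step \eqref{eq:lap}. The key observation is that a ball of Frobenius radius $\frac{1}{2}$ around a permutation matrix $\Pib^\ast$ cannot contain any other permutation matrix, since any two distinct permutation matrices differ in at least two entries by $\pm 1$ and hence are at Frobenius distance at least $\sqrt{2} > \frac{1}{2}$ apart. Therefore $\Pib^\ast$ is strictly the closest permutation matrix to $\Pb^\ast$, and the orthogonal projection onto $\Pir$ returns $\hat{\Pib} = \Pib^\ast$. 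This establishes that RGM recovers the globally optimal approximate isomorphism, which is precisely the assertion that RGM and GM are equivalent in the sense defined earlier.

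The final point to verify is that $\Pib^\ast$ really is the \emph{global} minimizer of the exact GM problem \eqref{eq:match}, and not merely some $\rho$-isomorphism that happens to be recovered. This is where Corollary~\ref{cor:strong-asymm} enters: the $(\epsilon,\delta)$-friendliness of $\Acal$ guarantees $\rho$-asymmetry, so no competing permutation achieves distortion $\le \rho$ other than $\Pib^\ast$ itself, and the bounded perturbation creates no spurious near-symmetries. Consequently the $\rho$-isomorphism is unique and coincides with the global optimum of GM.

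I expect the main obstacle to be the bookkeeping in the projection argument: one must confirm that the strict inequality $\| \Pb^\ast - \Pib^\ast \|_\mathrm{F} < \frac{1}{2}$ from Lemma~\ref{lemma:stability}, combined with the minimal inter-permutation distance $\sqrt{2}$, genuinely forces a \emph{unique} LAP solution rather than a tie. Since $\frac{1}{2} < \frac{\sqrt{2}}{2}$, the nearest permutation is strictly unique and no degeneracy arises, so this obstacle is mild. The heavier lifting — the perturbation analysis establishing the radius-$\frac{1}{2}$ bound — is entirely absorbed into Lemma~\ref{lemma:stability}, which I am taking as given.
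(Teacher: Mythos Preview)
Your proposal is correct and follows essentially the same route as the paper: construct the unperturbed graph $\Bcal_0$ with $\Bb_0 = \Pib^\ast \Ab \Pib^{\ast\Tr}$, view $\Bb = \Bb_0 + \rho\Rb$ as a perturbation, invoke Lemma~\ref{lemma:stability} (with $\specrad=1$ collapsing the two-branch bound to $\rho < \delta^2\epsilon^4/(12 n^{1.5})$), and then project. The only visible difference is the projection step: the paper passes from $\|\Pb^\ast-\Pib^\ast\|_\mathrm{F}<\tfrac12$ to the \emph{elementwise} bound $|P^\ast_{ij}-\Pi^\ast_{ij}|<\tfrac12$ and concludes from that, whereas you use the global fact that distinct permutation matrices are at Frobenius distance $\ge\sqrt{2}$, so the radius-$\tfrac12$ ball around $\Pib^\ast$ contains no other permutation. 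Both arguments are valid; yours is arguably cleaner and makes the strict uniqueness of the LAP solution explicit. Your closing remark about Corollary~\ref{cor:strong-asymm} ensuring $\Pib^\ast$ is the genuine GM optimum is a point the paper leaves implicit; note that it also follows directly from your own argument, since if two distinct $\rho$-isomorphisms existed, Lemma~\ref{lemma:stability} would force the unique relaxed solution within $\tfrac12$ of both, contradicting $\sqrt{2}>1$.
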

\begin{proof}
Let $\Pib^\ast$ be a $\rho$-isomorphism relating $\Bcal$ and $\Acal$,
and let us denote $\Bb_0 = \Pib^{\ast} \Ab\Pib^{\ast \Tr}$ and $\Rb = \frac{1}{\rho}(\Bb-\Bb_0)$.
Then, $\Bcal_0$ is perfectly isomorphic to $\Acal$, and
$\Bcal$ is a perturbed version of $\Bcal_0$ with $\Bb = \Bb_0 + \rho \Rb$ and
$\| \Rb \|_\mathrm{F} = \frac{1}{\rho} \| \Bb - \Pib^{\ast } \Ab\Pib^{\ast \Tr} \|_\mathrm{F} \le 1$.
By Corollary~\ref{cor:strong-asymm}, $\Bcal$ is $\rho$-asymmetric and, hence, $\Bcal_0$ is asymmetric.
Denoting by $\Pb^\ast$ the solution of \eqref{eq:gm_relaxed1} applied to $\Acal$ and $\Bcal$,
we invoke Lemma~\ref{lemma:stability} which guarantees uniqueness of $\Pb^\ast$ and $\| \Pb^\ast - \Pib^\ast \|_\mathrm{F} < \textstyle{\frac{1}{2}}$.
By standard norm inequalities, this implies $|P^\ast_{ij} - \Pi^\ast_{ij}| < \textstyle{\frac{1}{2}}$  element-wise for every $i,j$.
Therefore, the projection of $\Pb^\ast$ onto $\Pir$ coincides with $\Pib^\ast$.
%
\end{proof}

As in the case of perfectly isomorphic graphs, checking the strong friendliness condition in Theorem~\ref{thm:asym:iso} is straightforward, while
checking the $\rho$-isomorphism of $\Acal$ and $\Bcal$ is not. Yet, as in the previous case, one can again
solve relaxation \eqref{eq:gm_relaxed1}, project the solution onto $\Pir$, and verify whether $\dis(\hat{\Pib}) < \rho$.
In case of a positive answer, $\hat{\Pib}$ is guaranteed to be the unique global minimizer of the graph matching problem;
otherwise, the graphs are guaranteed not to be $\rho$-isomorphic.
An empirical evaluation of the bound in Theorem~\ref{thm:asym:eps-iso} is presented in Figure \ref{fig:bound}.

\begin{figure}[tb]
\begin{center}
\includegraphics[width=0.75\linewidth]{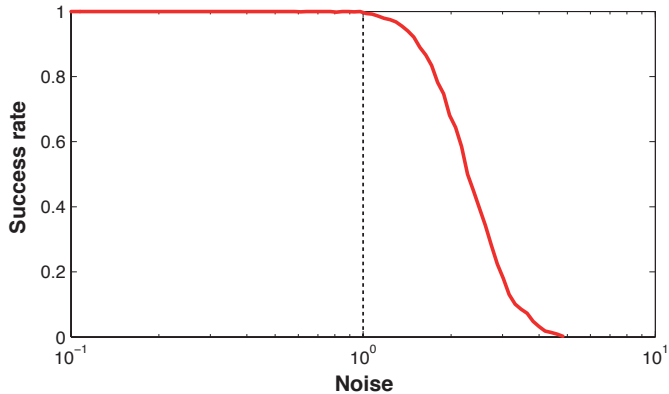}
\end{center}
\caption{Empirical evaluation of the bound in Theorem~\ref{thm:asym:eps-iso} on $10^3$ random strongly friendly graphs. For each graph, different amount of noise was added, and the ratio of successful runs of convex relaxation \eqref{eq:gm_relaxed1} recorded on the vertical axis (a run was deemed successful if the ground truth isomorphism is recovered). The noise strength on the horizontal axis is normalized for each graph in such a way that the value of the bound is always $1$. Observe that all runs with noise within the bound converged successfully, while those with stronger noise failed with probability increasing as the amount of noise grows. \label{fig:bound}} 
\end{figure}

\section{Matching of symmetric graphs}

\begin{figure}[tb]
\begin{center}
\includegraphics[width=0.75\linewidth]{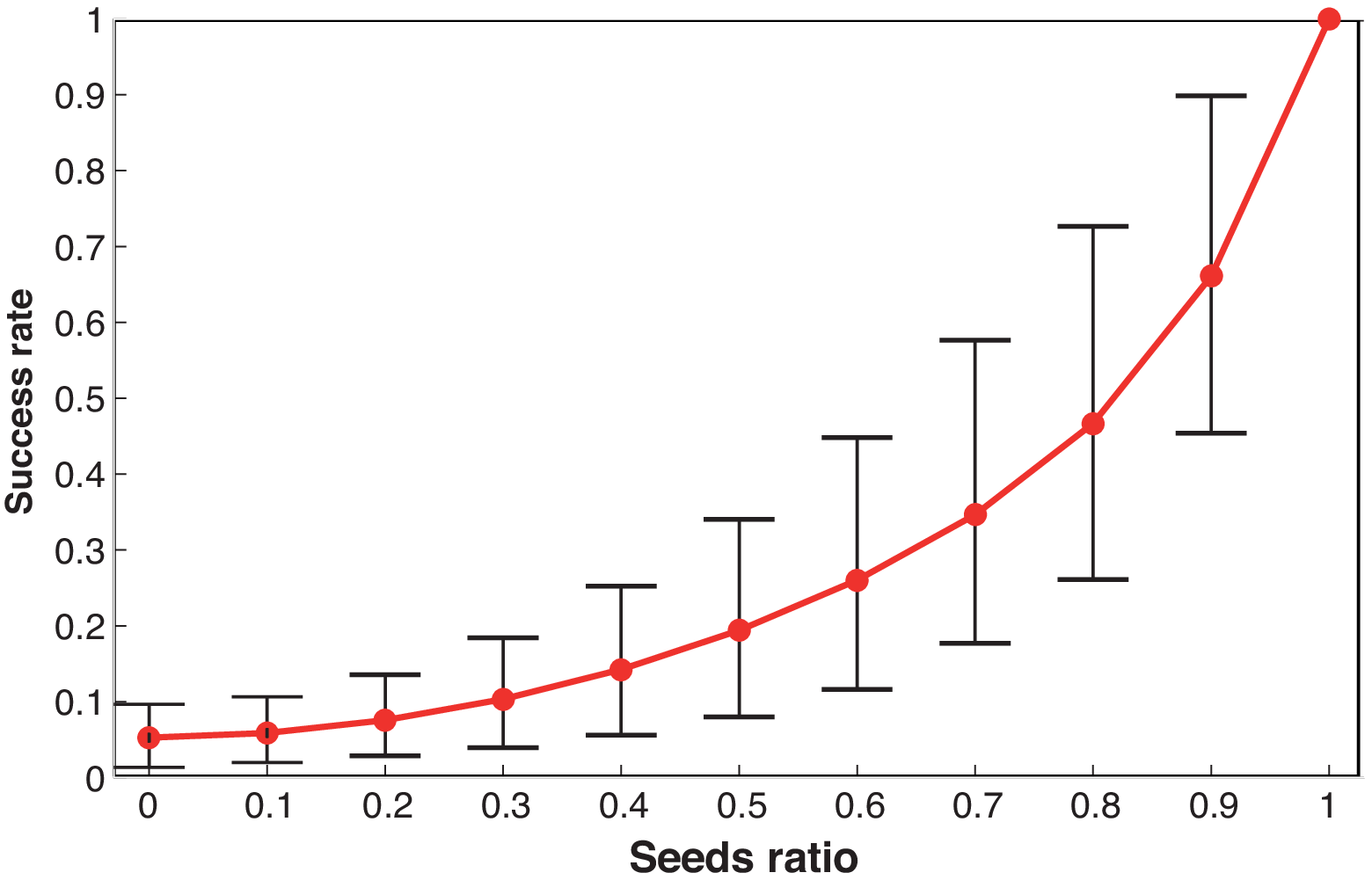}
\end{center}
\caption{Empirical evaluation of convex relaxation of seeded unfriendly graph matching on multiple graphs of different sizes and with different number and types of symmetries. Seeded matching was performed with a different number of random point seeds (plotted on the horizontal axis as the ratio with the number of non-trivial symmetries), all of which were made not invariant under the corresponding number of non-trivial symmetries. The vertical axis represents the success rate of recovering one of the exact isomorphisms. Average, minimum and maximum values are plotted as the red line and the error bars, respectively. 
Two exteme points on the horizontal axis are remarkable in particular: The leftmost point corresponding to unseeded graph matching, showing an empirical evidence to the fact that convex relaxation fails on unfriendly graphs. The failure rate depends on the number of non-trivial symmetries. The rightmost point corresponds to seeded graph matching with the seeds fully disambiguating the symmetries. In this case, perfect recovery of one of the isomorphisms is achieved. 
\label{fig:failure}} 
\end{figure}

The assumption of friendliness plays a crucial role in the results we have developed so far: it guarantees uniqueness of solution of the relaxation. 
These results cannot be directly extended to symmetric graphs, for which the solution space of the relaxation should contain all isomorphisms and their affine combinations. Unfortunately, besides the true isomorphisms, this affine subspace may also contain pseudo-stochastic matrices that are not permutations, some of which falling into Voronoi cells of permutations that are not isomorphisms. As the result, using convex relaxation for matching symmetric graphs may lead to a wrong solution, depending on the particular optimization algorithm and its initialization. 
An empirical evidence of this phenomenon is presented in Figure \ref{fig:failure}; see \cite{lyzinski2014relax} for a formal proof of failure of convex relaxation on a particular class of random graphs.
Yet, in what follows we show that by providing additional information in the form of corresponding seeds or vertex attributes disambiguating the symmetry, equivalence of the relaxation to the exact GM problem still holds.

Let $\Cb$ and $\Db$ be $n \times q$ matrices, whose columns are real-valued functions on the vertices of the graphs $\Acal$ and $\Bcal$, respectively.  For example, an indicator function of the $k$-th vertex in the graph is the $k$-th vector of the standard Euclidean basis in $\RR^n$. The matrices $\Cb$ and $\Db$ can be alternatively interpreted as $q$-dimensional vector-valued vertex attributes, with the $k$-th row of $\Cb$ representing the attribute of vertex $k$ in $\Acal$. 
We say that the matrices $\Cb$ and $\Db$ are covariant under a permutation $\Pib$ relating between the graphs if $\Pib \Cb = \Db$. 
With this additional information, we consider the following extension of \eqref{eq:gm_relaxed1}:
\begin{equation}
\Pb^\ast = \argmin{\Pb} \| \Pb \Ab - \Bb \Pb \|_\mathrm{F}^2 + \mu \| \Pb \Cb - \Db  \|_\mathrm{F}^2 \st \bb{P}\bb{1} = \bb{1}.
\label{eq:gm_relaxed_seeded}
\end{equation}
This problem can be thought of as a convex relaxation of \emph{seeded} graph matching, in which the seeds are  provided through a penalty, whose strength is controlled by the parameter $\mu$, rather than through a hard constraint; alternatively, it can be interpreted as a relaxation of \emph{attributed} graph matching, 
in which a permutation is sought to minimize the aggregate of edge adjacency and vertex attribute disagreement. In light of this duality, we henceforth refer to $\Cb$ and $\Db$ as to seeds.

As before, in order to avoid verifying whether a graph is symmetric or not, we consider the easily verifiable friendliness property. 
We assume that a general adjacency matrix of a graph has $d$ non-simple eigenspaces with multiplicities summing up to $m+d$.  To simplify notation, we will say that an eigenvalue $\lambda_i$ has multiplicity $m_i$, referring to the multiplicity of the eigenspace to which $\lambda_i$ belongs.
%
%
Since the eigenvectors spanning an $m_i$-dimensional eigenspace are defined up to a rotation within it, such eigenvectors shall be selected that none of them is orthogonal to the constant vector $\ones$, unless the entire eigenspace is orthogonal to it. We call the latter eigenspaces \emph{hostile}, and denote by $k$ the total dimension of hostile eigenspaces. A graph is friendly if and only if both $m$ and $k$ are $0$, and is $(m,k)$-\emph{unfriendly} otherwise.

The relation between the size of the symmetry group of a graph and its degree of unfriendliness is summarized in the following result:
\begin{lemma}
\label{lemma:symmetry}
Let $\Acal$ be a graph with $l=|\Sym\,\Acal|-1$ non-trivial symmetries.
Then, $\Acal$ is $(k,m)$-unfriendly with $k+m \ge l$.
\label{lemma:symmetry}
\end{lemma}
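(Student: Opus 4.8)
The plan is to work in the eigenbasis of $\Ab = \Ub\Lambdab\Ub^\Tr$ and exploit the fact, already used in the proof of Lemma~\ref{lemma:asymmetry}, that every symmetry $\Pib \in \Sym\,\Acal$ commutes with $\Ab$. First I would associate to each symmetry the conjugated matrix $\Fb = \Ub^\Tr \Pib \Ub$ and record three structural constraints: $\Fb$ is orthogonal (because $\Pib$ is), $\Fb$ commutes with $\Lambdab$ and is therefore block diagonal with one orthogonal block $\Fb_i$ per eigenspace (a scalar $\pm 1$ whenever the eigenspace is simple), and $\Fb \vb = \vb$ with $\vb = \Ub^\Tr \ones$, since the permutation $\Pib$ fixes $\ones$. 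The map $\Pib \mapsto \Fb$ is an injective group homomorphism from $\Sym\,\Acal$ into the block-diagonal orthogonal matrices that fix $\vb$, so it suffices to understand how much freedom the image can have.

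The decisive observation is that the non-identity part of $\Fb$ is confined to the hostile and the non-simple eigenspaces. Indeed, on a simple, non-hostile eigenspace the block is a scalar with $m_i = 1$ and $v_i \ne 0$, and the scalar relation $F_{ii} v_i = v_i$ forces $F_{ii} = 1$, exactly as in the final step of the proof of Theorem~\ref{thm:asym:iso}. Thus a symmetry can differ from the identity only through its scalar blocks on hostile simple eigenspaces, which contribute to the hostile dimension $k$, and through its orthogonal blocks on non-simple eigenspaces, each of which carries $m_i - 1$ genuinely free directions after the fixing constraint $\Fb_i \vb^{(i)} = \vb^{(i)}$ is imposed, summing to $m$. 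In the friendly case $k = m = 0$ this immediately recovers Lemma~\ref{lemma:asymmetry}.

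To obtain the quantitative bound $k + m \ge l$ I would then count. The $l$ non-trivial symmetries yield $l$ distinct non-identity block patterns supported on the hostile/non-simple part, and the plan is to charge them injectively to the $k + m$ available degrees of freedom: sign flips on hostile simple eigenspaces provide a supply of independent order-two moves bounded by their number, while each non-simple eigenspace of multiplicity $m_i$ can accommodate independent symmetric moves using at most its $m_i - 1$ excess directions once the fixing constraint on $\vb^{(i)}$ is accounted for.

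The hard part will be making this charging both rigorous and injective. A priori the orthogonal blocks $\Fb_i$ range over the continuous groups $O(m_i)$, whereas $\Sym\,\Acal$ is finite, so one cannot simply count dimensions; instead one must use that only finitely many of these blocks are realized by genuine integer permutation matrices and show that the number of distinct realized patterns is controlled by $k + m$ rather than by the full size of the orthogonal groups. I expect this to be the crux of the argument, and the place where the interaction between the integrality of $\Pib$, the fixing constraint $\Fb\vb = \vb$, and the precise definitions of $k$ and $m$ must be exploited carefully. One mitigating feature is that a non-simple hostile eigenspace is counted in both $k$ and $m$, so $k + m$ over-counts the genuine degrees of freedom, which works in favour of the desired inequality $k + m \ge l$.
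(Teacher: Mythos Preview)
Your structural setup is sound and considerably more careful than the paper's own argument: conjugating each symmetry to $\Fb = \Ub^\Tr \Pib \Ub$, observing that $\Fb$ is orthogonal, block-diagonal along eigenspaces, and fixes $\vb = \Ub^\Tr\ones$, and then concluding that on every simple non-hostile eigenspace the block is forced to be $+1$, are all correct and constitute the right organisation of the problem. The paper's proof is a two-sentence sketch that simply asserts ``each non-trivial symmetry increments by one either $k$ \dots\ or $m$'', which is exactly the charging idea you propose, but stated without any attempt at justifying injectivity.

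However, the step you correctly flag as ``the crux'' cannot be completed, because the inequality $k+m \ge l$ is false in general. Take $K_3$: its adjacency matrix has a simple eigenvalue $2$ with eigenvector proportional to $\ones$ (non-hostile) and a double eigenvalue $-1$ whose eigenspace is $\ones^\perp$ (hostile, dimension $2$). Thus $d = 1$, $m = 1$, $k = 2$, so $k + m = 3$; but $\Sym\,K_3 = S_3$ gives $l = 5 > 3$. The cycle $C_4$ gives $k+m = 4$ against $l = 7$. The injectivity you were worried about is precisely what fails: many distinct symmetries can act non-trivially on the \emph{same} hostile or non-simple eigenspace, so there is no injective charging of non-trivial symmetries to the $k+m$ available directions. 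Any correct statement of this kind must bound $l+1$ by a product of block-wise group orders (hence exponentially or factorially in the eigenspace dimensions), not linearly by $k+m$. Neither the paper's sketch nor any completion of your plan can close this, since the target inequality itself is wrong; your instinct that the integrality-plus-counting step was the fragile point was exactly right.
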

\begin{proof}
The proof extends the proof of Lemma~\ref{lemma:asymmetry}, where we showed that for every non-trivial symmetry $\Pib$,
there exists at least one $i$ such that $\Pib \ub_i \ne \ub_i$ is an eigenvector of $\Ab$ corresponding to $\lambda_i$.
Furthermore, if $\ub_i$ is simple, $\ub_i^\Tr \ones = 0$. Therefore, each non-trivial symmetry increments by one either $k$ (in case $\lambda_i$ is simple) or $m$ (otherwise), or both.
\end{proof}

It is easy to observe that with $\mu=0$, each $m_i$-dimensional non-simple eigenspace of the adjacency matrix decreases the rank of system \eqref{eq:F} by $m_i$; if the eigenspace is hostile, the rank is further decreased by one. This is precisely the reason for Theorem \ref{thm:asym:iso} not being applicable to unfriendly graphs.
The introduction of the seeds disagreement term to the objective contributes to \eqref{eq:F} a term of the form $\mu \Fb \Gb$, where $\Gb = \Ub^\Tr \Db \Db^\Tr \Ub$ is the Gram matrix of the seeds $\Db$ represented in the eigenbasis of the adjacency matrix. Since $\Gb$ is always positive semi-definite, the rank of system \eqref{eq:F} typically increases and becomes full under the following conditions:

\begin{theorem}
\label{theorem:symmetric}
Let $\Acal$ and $\Bcal$ be isomorphic unfriendly graphs with adjacencies $\Ab$ and $\Bb$ and seeds $\bb{C}$ and $\bb{D}$, respectively. Let $\bb{C}$ and $\bb{D}$ be covariant under a particular isomorphism $\Pib^\ast \in \Iso(\Acal\mapsto \Bcal)$, and let $\bb{D}$ further satisfy for every non-simple $m_i$-dimensional eigenspace of  $\Bb$ corresponding to $\lambda_i = \cdots = \lambda_{i+m_i}$, $(\ones^\Tr \ub_i) \Db \Db^\Tr \ub_j \ne \ones (\ub_i^\Tr \Db\Db^\Tr \ub_j)$ for every $j = i+1,\dots, i+m_i$ if the eigenspace is not hostile, or $\Db \Db^\Tr \ub_j \ne \bb{0}$ for every  $j = i,\dots, i+m_i$ otherwise.
Then, $\Pib^\ast$ is the the unique minimizer of \eqref{eq:gm_relaxed_seeded} for any $\mu > 0$.
\label{theorem:symmetric}
\end{theorem}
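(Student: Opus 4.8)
The plan is to follow the proof of Theorem~\ref{thm:asym:iso}: absorb the given isomorphism into a change of variables, and then replace its ``$\Fb=\Ib$ because \eqref{eq:F} has full rank'' argument by a strict-convexity argument that makes the role of the seeds transparent. First I would absorb $\Pib^\ast$ into the variable. Since $\Pib^\ast\in\Iso(\Acal\mapsto\Bcal)$ we have $\Pib^\ast\Ab=\Bb\Pib^\ast$, and by covariance $\Pib^\ast\Cb=\Db$. Writing $\Pb=\Qb\Pib^\ast$ and using orthogonality of $\Pib^\ast$ together with $\Pib^\ast\ones=\ones$, the two penalty terms become $\|\Pb\Ab-\Bb\Pb\|_\mathrm{F}=\|\Qb\Bb-\Bb\Qb\|_\mathrm{F}$ and $\|\Pb\Cb-\Db\|_\mathrm{F}=\|(\Qb-\Ib)\Db\|_\mathrm{F}$, while the constraint becomes $\Qb\ones=\ones$. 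Thus \eqref{eq:gm_relaxed_seeded} is equivalent to minimizing the convex quadratic $g(\Qb)=\|\Qb\Bb-\Bb\Qb\|_\mathrm{F}^2+\mu\|(\Qb-\Ib)\Db\|_\mathrm{F}^2$ over the affine set $\{\Qb:\Qb\ones=\ones\}$. Since $g(\Ib)=0$ and $g\ge 0$, the matrix $\Qb=\Ib$ (i.e.\ $\Pb=\Pib^\ast$) is a global minimizer, and the whole statement reduces to proving that it is the \emph{unique} one.

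For a convex quadratic over an affine set, the minimizer is unique if and only if the Hessian quadratic form is positive definite on the tangent space of the constraint. Here the form in a direction $\bb{\Delta}$ is $\|\bb{\Delta}\Bb-\Bb\bb{\Delta}\|_\mathrm{F}^2+\mu\|\bb{\Delta}\Db\|_\mathrm{F}^2$ and the tangent space is $\{\bb{\Delta}:\bb{\Delta}\ones=\bb{0}\}$, so it suffices to show that the only $\bb{\Delta}$ with $\bb{\Delta}\ones=\bb{0}$, $\bb{\Delta}\Bb=\Bb\bb{\Delta}$ and $\bb{\Delta}\Db=\bb{0}$ is $\bb{\Delta}=\bb{0}$. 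This is the precise sense in which the seed term supplies the rank that was missing from \eqref{eq:F} for unfriendly graphs. Diagonalizing with $\Eb=\Ub^\Tr\bb{\Delta}\Ub$, the commutation constraint forces $\Eb$ to be block diagonal with respect to the eigenspaces of $\Bb$ (exactly as in Lemma~\ref{lemma:asymmetry}), so I can analyze one eigenspace $W$ at a time. On the block indexed by $W$, each row corresponds to a vector $\bb{w}\in W$, and the surviving constraints read $\ones^\Tr\bb{w}=0$ and $\Db^\Tr\bb{w}=\bb{0}$. Hence the block vanishes precisely when $W\cap\ones^\perp\cap\ker\Db^\Tr=\{\bb{0}\}$, and the theorem reduces to verifying that this intersection is trivial for every eigenspace.

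The last step is the per-eigenspace bookkeeping governed by the hostile/non-hostile dichotomy of Lemma~\ref{lemma:symmetry}. A simple non-hostile eigenvalue has $\ub_i^\Tr\ones\ne0$, so already $W\cap\ones^\perp=\{\bb{0}\}$ and no seed is needed. For a hostile eigenspace $W\subseteq\ones^\perp$ of dimension $m_i+1$, triviality demands that $\Db^\Tr$ be injective on all of $W$, which is the purpose of the hypothesis $\Db\Db^\Tr\ub_j\ne\bb{0}$ for each $j=i,\dots,i+m_i$ (note $\Db\Db^\Tr\ub_j\ne\bb{0}$ iff $\Db^\Tr\ub_j\ne\bb{0}$). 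For a non-simple non-hostile eigenspace, $W\cap\ones^\perp$ is $m_i$-dimensional, and the hypothesis $(\ones^\Tr\ub_i)\Db\Db^\Tr\ub_j\ne\ones(\ub_i^\Tr\Db\Db^\Tr\ub_j)$ — equivalently, that $\Db\Db^\Tr\ub_j$ is not a scalar multiple of $\ones$ — imposed for each $j=i+1,\dots,i+m_i$ must force $\Db^\Tr$ to be injective on that subspace.

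I expect this final reconciliation to be the main obstacle. The hypotheses are stated as $m_i$ (resp.\ $m_i+1$) per-vector non-degeneracy conditions, which matches the dimension of the subspace on which injectivity of $\Db^\Tr$ is required; but upgrading these pointwise conditions to genuine injectivity on the whole subspace — rather than merely nonvanishing on individual basis vectors — while respecting the eigenbasis-selection convention used to define hostility, is the delicate point. A workable route is to fix the convenient basis $\{(\ub_i^\Tr\ones)\ub_j-(\ub_j^\Tr\ones)\ub_i\}_{j}$ of $W\cap\ones^\perp$ (for the non-hostile case) or the basis $\{\ub_j\}_j$ of $W$ (for the hostile case), and to argue that a nontrivial element of $\ker\Db^\Tr$ in $W\cap\ones^\perp$ would contradict one of the stated conditions; positivity for all $\mu>0$ then follows since each summand of the Hessian form is nonnegative and they cannot both vanish on a nonzero tangent direction.
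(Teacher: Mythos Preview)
Your route is genuinely different from the paper's. The paper proceeds via the Lagrangian first-order conditions: it eliminates the multipliers $\gamma_i$ from the diagonal equations, obtains for each row of $\Fb$ a linear system $\Mb_i\fb_i=\cb_i$ with $\Mb_i=\Mb_i^0+\mu\cdot(\text{seed term})$, and then argues that the seed term fills in the rank deficiency of $\Mb_i^0$. You instead invoke strict convexity on the affine constraint set, reducing uniqueness to showing that the Hessian null space $\{\bb{\Delta}:\bb{\Delta}\Bb=\Bb\bb{\Delta},\ \bb{\Delta}\Db=\bb{0}\}$ meets the tangent space $\{\bb{\Delta}:\bb{\Delta}\ones=\bb{0}\}$ trivially, and then translate this row-by-row into the eigenspace condition $W\cap\ones^\perp\cap\ker\Db^\Tr=\{0\}$. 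This is cleaner: it bypasses the Lagrange multipliers, makes the $\mu$-independence immediate (two nonnegative quadratic forms whose kernels intersect trivially on a subspace sum to a strictly positive form there for every $\mu>0$), and isolates the exact geometric obstruction.

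The obstacle you flag at the end is genuine, and it is not a defect of your method---the paper's proof has the same gap. The per-vector hypotheses $\Db\Db^\Tr\ub_j\ne\bb{0}$ (hostile case) and $(\ones^\Tr\ub_i)\Db\Db^\Tr\ub_j\ne(\ub_i^\Tr\Db\Db^\Tr\ub_j)\ones$ (non-hostile case) guarantee only that certain individual basis vectors avoid $\ker\Db^\Tr$ (resp.\ have image not parallel to $\ones$); they do \emph{not} force $\Db^\Tr$ to be injective on all of $W$ or $W\cap\ones^\perp$. A two-dimensional hostile eigenspace with $\Db^\Tr\ub_i=\Db^\Tr\ub_{i+1}\ne\bb{0}$ already gives a counterexample, since $\ub_i-\ub_{i+1}\in W\cap\ones^\perp\cap\ker\Db^\Tr$. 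The paper glosses over this by asserting that $\Mb_i$ becomes full rank once ``the range of $\Gb$ contains the latter null space, which happens if $\bb{R}\eb_j\ne\bb{0}$ for every $j$''---the same non sequitur phrased in terms of ranges instead of kernels. So your proof cannot be completed from the hypotheses as literally stated, but neither can the paper's; what is actually needed is a linear-independence condition on the $\Db^\Tr\ub_j$ (restricted appropriately), not merely per-vector nonvanishing.
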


\noindent For a proof, see Appendix.
Conditions of Theorem \ref{theorem:symmetric} are both easy to verify and are \emph{constructive} in the sense that given the spectral decomposition of the adjacency matrix of one of the graphs, the Theorem specifies how to construct a set of seeds such that if a set of corresponding seeds in the other graph is further given and is covariant under a preferred isomorphism, the convex relaxation \eqref{eq:gm_relaxed_seeded} is guaranteed to find the latter isomorphism. In particular, $\Db$ must have at least $m+k$ linearly independent columns, which by Lemma \ref{lemma:symmetry} implies that in order to disambiguate $l$ non-trivial symmetries, the number of seeds has to be at least $l$.

In practice, we observed that it is sufficient to generate random point seeds ensuring that the matrix $\Db$ is not invariant under any non-trivial symmetry of the graph, namely $\Pib \Db \ne \Db$ for every $\Pib \in \Sym\, \Bcal \setminus \{ \Ib \}$. An empirical corroboration of this result is presented in Figure \ref{fig:failure} in Supporting Information.

\section{Discussion and conclusion}

In this paper, we considered convex relaxation of the NP graph matching problem. We proposed an easy-to-verify friendliness property, and proved that for friendly graphs, convex relaxation is equivalent to the computationally intractable exact matching; the result extends to inexact matching of strongly friendly graphs. In such cases, convex relaxation is guaranteed to find the exact (or approximate) isomorphism or guarantee its inexistence.
We also showed that convex relaxation is applicable to exact matching of unfriendly graphs (in particular, those possessing non-trivial symmetries), provided that additional information is supplied in the form of seeds or vertex attributes. We showed constructive spectral characteristics that such seeds/attributes have to satisfy in order for the convex relaxation of seeded graph matching to be guaranteed to find one of the isomorphisms.
%

The analysis we presented is inspired in part by \cite{aflalo2013spectral} where matching surfaces    
 is treated as matching metric spaces performed in their spectral domain. 
However, despite this superficial resemblance, our proofs here are based on the spectral properties of the adjacency matrices, in contrast to the those of the graph Laplacian frequently studied in spectral graph theory.

A surprising observation is that none of our results is influenced by the nonnegativity constraints $\Pb \ge \bb{0}$.
While the space of doubly-stochastic matrices is the smallest convex set containing the space $\Pir$ of $n\times n$ permutations, and is therefore the most natural convex relaxation of the latter, our findings question the utility of the non-negativity constraints in graph matching problems, and suggest relaxing $\Pir$
as the bigger affine space $\{ \Pb : \Pb \ones = \Pb^\Tr \ones = \ones \}$.
Also, for the class of friendly graphs on which we were able to prove global convergence of convex relaxations, the column-wise equality constraints $\bb{P}^\Tr \bb{1}=\bb{1}$ has no utility and can be removed. The question whether these constraints are at all needed, and whether they can help extending the applicability of convex relaxation requires further investigation.
From the practical perspective, the removal of the nonnegativity constraints might allow the use of simpler and better scalable convex optimization algorithms.
Furthermore, the removal of the constraints $\bb{P}^\Tr \bb{1}=\bb{1}$ splits the remaining constraints $\bb{P}\bb{1} = \bb{1}$ into $n$ constraints separable with respect to the rows of $\bb{P}$. This allows to employ block-coordinate update schemes operating each time on $n$ variables from one row of $\bb{P}$ only, thus potentially improving the algorithm scalability to large graphs.


\section*{Acknowledgement}
YA and RK are supported by the ERC advanced grant 267414 (NORDIA).
AB is supported by the ERC starting grant 335491 (RAPID). Valuable feedback from Marcelo Fiori, Guillermo Sapiro, and Michael Elad is acknowledged.

\appendix
\section{Appendix}

\begin{lemma}
\label{lemma:perturbation}
Let $\ub_0$ be the solution of a full-rank linear system $\Mb\ub = \cb$,
and let $\ub$ be the solution of the perturbed full-rank system $(\Mb + \rho \Nb) \ub  = \cb$, $\rho > 0$. Then,
\begin{equation}
\label{eq:perturbed_delta}
\| \ub  - \ub_0  \| \leq \frac{\rho \left\|\Mb^{-1}\right\| \|\Nb\| \|\ub_0\|}{1-\rho \left\|\Mb^{-1}\right\| \|\Nb\|  }.
\end{equation}
\begin{proof}
Denoting $\deltab = \ub  - \ub_0$, we have $\Mb \ub_0 = \cb$ and $(\Mb+\rho\Nb)(\ub_0+\deltab)=\cb$,
from where $(\Mb+\rho\Nb)\deltab=-\rho\Nb \ub_0$. The latter is equivalent to
$\deltab=-\rho(\Mb+\rho\Nb)^{-1}\Nb \ub_0$ assuming an invertible $(\Mb+\rho\Nb)^{-1}$.
From the identity
\begin{eqnarray*}
(\Mb+\rho\Nb)^{-1} &=& \Mb^{-1}(\Ib+\rho\Mb^{-1}\Nb)^{-1}  \\
&=& \Mb^{-1}\left(\sum_{i=0}^\infty (-\rho)^i(\Mb^{-1}\Nb)^i\right)
\end{eqnarray*}
and the inequality $\|(\Mb^{-1}\Nb)^i \vb \| \le \|\Mb^{-1}\| ^i\left\|\Nb\right\| ^i\|\vb\|$ holding for every $i \ge 0$ and every $\vb$,
we have
$$
\| \deltab \| = \left\|(\Mb+\rho\Nb)^{-1}\Nb \ub_0\right\| \leq\frac{\left\|\Mb^{-1}\right\| \|\Nb\| \|\ub_0\|}{1-\rho\|\Mb^{-1}\| \left\|\Nb\right\| }.
$$
\end{proof}
\end{lemma}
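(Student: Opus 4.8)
The plan is to derive an exact expression for the error vector $\deltab = \ub - \ub_0$ and then bound it via a Neumann series for the inverse of the perturbed operator. First I would write the two governing equations side by side, $\Mb\ub_0 = \cb$ and $(\Mb + \rho\Nb)\ub = \cb$, substitute $\ub = \ub_0 + \deltab$ into the second, and subtract off $\Mb\ub_0 = \cb$ to cancel the zeroth-order terms. This produces the exact identity $(\Mb + \rho\Nb)\deltab = -\rho\Nb\ub_0$, with no approximation incurred.

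Since the perturbed system is assumed to be full rank, $\Mb + \rho\Nb$ is invertible and I would solve explicitly for $\deltab = -\rho(\Mb + \rho\Nb)^{-1}\Nb\ub_0$. The key algebraic step is to factor $(\Mb + \rho\Nb)^{-1} = \Mb^{-1}(\Ib + \rho\Mb^{-1}\Nb)^{-1}$ and expand the second factor as the Neumann series $\sum_{i=0}^\infty (-\rho)^i(\Mb^{-1}\Nb)^i$. Taking norms, applying submultiplicativity term by term in the form $\|(\Mb^{-1}\Nb)^i\vb\| \le \|\Mb^{-1}\|^i\|\Nb\|^i\|\vb\|$, and summing the resulting geometric series in the quantity $\rho\|\Mb^{-1}\|\|\Nb\|$ yields the claimed bound, with the leading factor $\rho$ carried over directly from the explicit expression for $\deltab$.

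The main obstacle — indeed essentially the only subtle point — is the convergence of the Neumann series, which requires $\rho\|\Mb^{-1}\|\|\Nb\| < 1$. This is exactly the regime in which the denominator $1 - \rho\|\Mb^{-1}\|\|\Nb\|$ in the claimed estimate is strictly positive, so the hypothesis is implicit in the statement being non-vacuous; I would make this condition explicit at the moment the geometric series is summed, and note that outside this regime the bound carries no information. Everything else reduces to routine manipulation of operator norms, so no further estimates beyond submultiplicativity and the geometric sum are needed.
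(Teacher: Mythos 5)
Your proposal is correct and follows essentially the same route as the paper's proof: the exact identity $(\Mb+\rho\Nb)\deltab = -\rho\Nb\ub_0$, the factorization $(\Mb+\rho\Nb)^{-1} = \Mb^{-1}(\Ib+\rho\Mb^{-1}\Nb)^{-1}$, the Neumann series expansion, submultiplicativity, and the geometric sum. You even improve slightly on the paper by making explicit the convergence condition $\rho\|\Mb^{-1}\|\|\Nb\| < 1$, which the paper leaves implicit in the positivity of the denominator.
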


\begin{proof}[Proof of Lemma~\ref{lemma:stability}]
%
%
%
The proof goes along the lines of the proof of Theorem~\ref{thm:asym:iso}. As before, we reparametrize the optimization in terms
of $\Qb=\Pb {\Pib^{\ast\Tr}}$ instead of $\Pb$.
Denoting by $\Bb = \Ub \Lambdab \Ub^\Tr$ the orthonormal eigendecomposition of $\Bb$ and $\Eb=\Ub^\Tr\Rb\Ub$, we substitute $\tilde{\Bb} = \Ub (\Lambdab + \rho \Eb)\Ub^\Tr$ into a perturbed version of the Lagrangian,
\begin{equation}
\nabla_{\Qb} L(\Qb,\alphab) = \Qb\tilde{\Bb}^2+\Bb^2\Qb-2\Bb\Qb\tilde{\Bb}+\alphab\ones^\Tr  = 0,
\end{equation}
and obtain a perturbed version of \eqref{eq:F},
\begin{eqnarray*}
\lefteqn{(\Fb\Lambdab^2+\Lambdab^2\Fb-2\Lambdab\Fb\Lambdab)  +\rho(\Fb\Eb\Lambdab+\Fb\Lambdab\Eb-2\Lambdab\Fb\Eb) }\\
&& + \gammab \vb^\Tr+\rho^2\Fb\Gb=0 \hspace{3.5cm},
\end{eqnarray*}
where $\Gb = \Eb^2$, and $\Fb$, $\vb$, and $\gammab$ are defined as before. The system can be rewritten coordinate-wise as
\begin{equation}
\label{eqn:eqn_F_per}
F_{ij}(\lambda_i-\lambda_j)^2+v_j\gamma_i+\rho\sum_kF_{ik}\left(E_{kj}(\lambda_j+\lambda_k-2\lambda_i)+\rho G_{kj}\right)=0.
\end{equation}
Substituting $i=j$ and re-arranging the terms yields
\begin{equation*}
\gamma_i=-\frac{\rho}{v_i} \sum_k F_{ik} (E_{ki}(\lambda_k - \lambda_i) + \rho G_{ki} ).
\end{equation*}
Substituting $\gamma_i$ back into \eqref{eqn:eqn_F_per} and multiplying both sides by $v_i$ yields
\begin{equation*}
\begin{disarray}{l}
F_{ij}v_i(\lambda_i-\lambda_j)^2  + \rho^2 \sum_k F_{ik} (v_i G_{kj} - v_j G_{ki}) \\
+ \rho \sum_k F_{ik}\left(v_i E_{kj}(\lambda_j+\lambda_k-2\lambda_i)
- v_j E_{ki}(\lambda_k - \lambda_i) \right)  = 0.
\end{disarray}
\end{equation*}
Denoting
\begin{eqnarray}
\label{eq:sijk}
s_{jk}^i &=&\frac{1}{(\lambda_i-\lambda_j)^2}\left(E_{kj}(\lambda_j+\lambda_k-2\lambda_i)
- \frac{v_j}{v_i} E_{ki}(\lambda_k - \lambda_i) \right)\nonumber \\
t_{jk}^i &=&\frac{1}{(\lambda_i-\lambda_j)^2}\left(G_{kj} - \frac{v_j}{v_i} G_{ki}\right),
\end{eqnarray}
for $i \ne j$, and $s^i_{ik} = t^i_{ik} = 0$,
we arrive at the following perturbed linear system
\begin{eqnarray}
\label{eq:F_perturbed}
F_{ij}+\rho\sum_kF_{ik}(s_{jk}^i+\rho t_{jk}^i) &=&0, \,\,\,i\neq j\\ \nonumber
\sum_k F_{ik}v_k&=&v_i,
\end{eqnarray}
where the second set of equations $\Fb \vb = \vb$ comes, as before, from the pseudo-stochasticity constraint
$\Qb \bb{1} = \bb{1}$.
Also note that we absorbed the second-order perturbation into the terms $t^i_{jk}$.

From this point, it remains to show that the solution $\Fb$ of the perturbed system \eqref{eq:F_perturbed}
is unique and sufficiently close to the solution $\Fb_0 = \Ib$ of the unperturbed system, for which we rely on
a result in perturbation analysis of linear systems summarized as Lemma~\ref{lemma:perturbation} above.
Denoting by $\fb = (F_{11},\dots,F_{1n},\dots,F_{n1},\dots,F_{nn})^\Tr$
the row stack vector representation of $\Fb$, equation \eqref{eq:F_perturbed} can be rewritten as
\begin{equation}
(\Mb+\rho\Nb)\fb=\cb \label{eq:perturbed-system}
\end{equation}
with $\Mb=\mathrm{diag}\{\Mb_1,\dots,\Mb_n\}$ being an $n^2 \times n^2$ block-diagonal matrix, where
each $\Mb_i$ is an $n\times n$ block consisting of the identity matrix with the $i$-th row replaced by the row vector $\vb = (v_1,\dots,v_n)$.
Similarly, $\Nb$ is an $n^2 \times n^2$ block-diagonal matrix with the $n \times n$
blocks $\Nb_i = (s^i_{jk}+ \rho t^i_{jk})_{jk}$,
and $\cb$ is an $n^2\times 1$ vector of zeros, with every $(i-1)(n+1)+1$-st element replaced by $v_i$.
Due to the block-diagonal structure of $\Mb$, we readily have that $\Mb^{-1}$ is also block-diagonal
with the same structure, where each $n\times n$ diagonal block $\Mb^{-1}_i$ is
the identity matrix with the $i$-th row replaced by the row vector 
$\bb{w} = \displaystyle{\frac{1}{v_i}} (-v_1,\dots, -v_{i-1}, 1,  -v_{i+1},\dots,-v_n)$.
%
Decomposing each $\Mb^{-1}_i$ into the sum of the identity matrix and a rank-one matrix, we have
$$
\| \Mb^{-1}_i \| \le \| \bb{I} \| + \| \bb{w} \| < 1 + \frac{\sqrt{n}}{\epsilon^2},
$$
where the second inequality is due to the $(\epsilon,\delta)$-friendliness assumption.
Due to the block-diagonal structure of $\Mb^{-1}$,
\begin{equation}
\label{eq:M-BOUND}
\|\Mb^{-1}\| \le \max_{i=1,\dots,n} \|\Mb_i^{-1}\| < 1 + \frac{\sqrt{n}}{\epsilon^2}.
\end{equation}

Similarly, we obtain
\begin{eqnarray}
\|\Nb\|^2 &\le & \max_{i=1,\dots,n} \|\Nb_i \|^2_\mathrm{F}=\sum_{jk}(s_{jk}^i+\rho t_{jk}^i)^2 \nonumber\\
&\le & 2 \left( \sum_{jk}(s_{jk}^i)^2+\rho^2 (t_{jk}^i)^2 \right).
\label{eq:n-bound}
\end{eqnarray}
To bound the $(s_{jk}^i)^2$ terms, we invoke strong friendliness again, obtaining $\frac{v_i}{v_j} \le  \frac{1}{\epsilon^2}$.
Combining this result with
$(\lambda_i - \lambda_j)^2 \ge \delta^2$ for $i \ne j$, $\lambda_i^2 \le \specrad^2$, and
substituting into \eqref{eq:sijk} yields
\begin{eqnarray}
\label{eq:sijk2}
(s_{jk}^i)^2 &\leq& \left( \frac{2\specrad}{\delta^2}\left(2E_{kj} +  \frac{1}{\epsilon^2}E_{ki} \right) \right)^2  \\
& \le &  \frac{4\specrad^2}{\delta^4}\left(2E_{kj}^2 + \frac{4}{\epsilon^2}|E_{kj}E_{ki}| + \frac{1}{\epsilon^4} E^2_{ki} \right). \nonumber
\end{eqnarray}
For the first term, we use

\begin{eqnarray*}
\sum_{jk} E_{kj}^2 = \| \Eb \|^2_\mathrm{F} = \| \Ub^\Tr \Rb \Ub \|^2_\mathrm{F} = \| \Rb \|^2_\mathrm{F} \le 1,
\end{eqnarray*}
from where $ E^2_{ki} \le 1$. Using standard norm inequalities,
$$
\begin{disarray}{ll}
\sum_{jk} |E_{kj} E_{ki}| \le \sum_{jk} |E_{kj}| \le n.
\end{disarray}
$$
Substituting the latter bounds into \eqref{eq:sijk2} yields
\begin{eqnarray}
\label{eq:sum_sijk2}
\sum_{ijk} (s_{jk}^i)^2 &\leq& \frac{4\specrad^2}{\delta^4} \left( 2 + \frac{4n}{\epsilon^2} + \frac{1}{\epsilon^4} \right).
\end{eqnarray}

The $(t_{jk}^i)^2$ terms in \eqref{eq:n-bound} are bounded in a similar way. First, we observe that
$$
(t_{jk}^i)^2\leq \frac{2}{\delta^4}\left(G_{kj}^2+\frac{1}{\epsilon^4} G_{ki}^2\right).
$$
Substituting $\Gb = \Eb^2$ leads to
$$
\sum_{jk}G_{kj}^2= \|\Gb\|^2_{\mathrm{F}} \le \|\Eb\|^4_{\mathrm{F}} \le 1,
$$
from where
\begin{equation}
\label{eq:sum_tijk2}
\sum_{jk}(t_{jk}^i)^2\leq \frac{2}{\delta^4}\left(1+\frac{1}{\epsilon^4}\right).
\end{equation}
Substituting \eqref{eq:sum_sijk2} and \eqref{eq:sum_tijk2} into \eqref{eq:n-bound} and assuming $\rho^2 \le 2\specrad^2$ yields
\begin{eqnarray}
\label{eq:N-BOUND}
\|\Nb\|^2 &\leq&\frac{4}{\delta^4}\left(2\specrad^2\left(2+\frac{4n}{\epsilon^2}+\frac{1}{\epsilon^4}\right)+\rho^2\left(1+\frac{1}{\epsilon^4}\right)\right) \nonumber\\
&\leq&\frac{8 \specrad^2}{\delta^4}\left(3+\frac{4n}{\epsilon^2}+\frac{2}{\epsilon^4}\right).
\end{eqnarray}

Combining bounds \eqref{eq:M-BOUND} and \eqref{eq:N-BOUND} and requiring $\epsilon \le 1$, one has
\begin{eqnarray}
\left\|\Mb^{-1}\right\| \|\Nb\| &<& \frac{\sqrt{8} \specrad (\epsilon^2 + \sqrt{n})}{\delta^2 \epsilon^4} \sqrt{ 3\epsilon^4+4\epsilon^2 n +2 }  \nonumber\\
&\le &  \frac{\sqrt{13}}{2}(1+\sqrt{2}) \frac{\specrad n}{\delta^2 \epsilon^4}.
\end{eqnarray}
It is easy to verify that for $n \ge 2$, demanding $\displaystyle{\rho < \frac{\delta^2 \epsilon^4}{12  \specrad n^{1.5}}}$ implies
\begin{eqnarray}
\rho < 
 \frac{1}{(1+2\sqrt{n})\left\|\Mb^{-1}\right\| \|\Nb\| },
 \label{eq:ratio_}
\end{eqnarray}
from where it follows that
\begin{eqnarray}
\frac{\rho \left\|\Mb^{-1}\right\| \|\Nb\| }{1-\rho \left\|\Mb^{-1}\right\| \|\Nb\| } < \frac{1}{2 \sqrt{n}}. \label{eq:ratio}
\end{eqnarray}
Since \eqref{eq:ratio_} implies $\| \rho \Mb^{-1} \Nb\|  < \rho \left\|\Mb^{-1}\right\| \|\Nb\| < 1$, $\Ib + \rho \Mb^{-1} \Nb$ is invertible, and so is $\Mb(\Ib + \rho \Mb^{-1} \Nb) = \Mb + \rho \Nb$, from which uniqueness of the solution follows.
Invertibility of the perturbed system \eqref{eq:perturbed-system} allows to invoke Lemma~\ref{lemma:perturbation}, which combined with \eqref{eq:ratio} yields
\begin{equation}
\|\Fb-\Fb_0\|_\mathrm{F}  = \|\fb - \fb_0\| \leq
\frac{\rho \left\|\Mb^{-1}\right\| \|\Nb\| }{1-\rho \left\|\Mb^{-1}\right\| \|\Nb\| } \|\fb_0\| < \frac{1}{2},
\end{equation}
where we used $\| \fb_0 \| = \| \Fb_0 \|_\mathrm{F} = \sqrt{n}$ since $\Fb_0 = \Ib$.
Recalling that $\Pb = \Ub\Fb\Ub^\Tr\Pib^\ast$ and that the unperturbed solution is $\Pb_0=\Pib^\ast$,
and using the orthonormality of $\Ub$ and $\Pib^\ast$,
one has $\| \Pb  - \Pib^\ast \|_\mathrm{F}  = \|\Ub(\Fb-\Fb_0)\Ub^\Tr\Pib^\ast\|_\mathrm{F} = \|\Fb-\Fb_0\|_\mathrm{F}$, which completes the proof.
\end{proof}

\begin{proof}[Proof of Theorem~\ref{theorem:symmetric}]
The proof goes along the lines of the proof of Theorem~\ref{thm:asym:iso}. We reparametrize the optimization problem \eqref{eq:gm_relaxed_seeded} in terms
of $\Qb=\Pb {\Pib^{\ast\Tr}}$ instead of $\Pb$. The assumption that $\Pib^{\ast\Tr}\bb{C} = \bb{D}$ allows to rewrite the second term of the objective as $\mu \| \Pb \Cb - \Db \|_\mathrm{F}^2 = \mu \| \Qb \Db - \Db \|_\mathrm{F}^2$,
yielding the following first-order optimality condition:
\begin{equation}
\Qb\tilde{\Bb}^2+\Bb^2\Qb-2\Bb\Qb\tilde{\Bb}+ \mu (\bb{Q}-\bb{I}) \bb{D}\bb{D}^\Tr \alphab\ones^\Tr  = 0.
\end{equation}
Denoting by $\Bb = \Ub \Lambdab \Ub^\Tr$ the orthonormal eigendecomposition of $\Bb$, and multiplying by $\Ub^\Tr$ from the left and by $\Ub$ from the right yields
\begin{equation}
\Fb\Lambdab^2+\Lambdab^2\Fb-2\Lambdab\Fb\Lambdab + \mu \Fb \Gb - \mu \Gb + \gammab \vb^\Tr=0,
\label{eq:Fsym}
\end{equation}
with $\Gb = \Ub^\Tr \Db \Db^\Tr \Ub$, to which, as before, we add the pseudo-stochasticity constraint $\Fb \vb = \vb$.
System \eqref{eq:Fsym} can be expressed coordinate-wise as
\begin{equation}
\label{eqn:eqn_Fsym_}
F_{ij}(\lambda_i-\lambda_j)^2 + \mu \sum_{k} F_{ik} G_{kj} - \mu G_{ij} +v_j\gamma_i=0.
\end{equation}
Note that both the system and the constraint decouple into $n$ independent systems whose variables are the rows $\bb{f}_i = (F_{i1},\dots,F_{in})^\Tr$ of $F$. 

Let us fix $i$ and distinguish between two cases: First, if $v_i \ne 0$ ($\bb{u}_i$ does not belong to a hostile eigenspace), setting $j = i$ yields
\begin{equation}
\gamma_i = \frac{\mu}{v_i} \left( G_{ii} - \sum_k F_{ik} G_{ki} \right).
\end{equation}
Substituting this result into  \eqref{eqn:eqn_Fsym_}, we obtain
\begin{eqnarray}
\label{eqn:eqn_Fsym__}
\lefteqn{F_{ij}(\lambda_i-\lambda_j)^2 + \mu \sum_{k} F_{ik} G_{kj} - \mu G_{ij}} \\
&&  + \mu \frac{v_j}{v_i} \left( G_{ii} - \sum_k F_{ik} G_{ki} \right) =0. \nonumber
\end{eqnarray}
This can be further rewritten as the $n \times n$ system $\bb{M}_i \bb{f}_i = \bb{c}_i$, where
\begin{eqnarray}
\lefteqn{\bb{M}_i = \mathrm{diag}\{  (\lambda_i - \lambda_1)^2, \dots, (\lambda_i - \lambda_1)^2  \}} \\
&& + \mu \left( \Ib - \frac{1}{v_i} \vb \eb_i^\Tr \right) \Gb + \eb_i \vb^\Tr \nonumber \\
\lefteqn{\bb{c}_i = \mu \Gb \eb_i - \mu \frac{G_{ii}}{v_i} \vb + \eb_i \eb_i^\Tr \vb}
\end{eqnarray}
and $\eb_i$ denotes the $i$-th standard Euclidean basis vector.
Note that since \eqref{eqn:eqn_Fsym__} gives a trivial equation for for $j=i$, we replaced it by the pseudo-stochasticity constraint $\vb^\Tr \fb_i = v_i$, expressed by the last terms of $\bb{M}_i$ and $\bb{c}_i$ above. 

The assumption that $\bb{C}$ and $\bb{D}$ are covariant under the isomorphism $\Pib^\ast$ makes the above system consistent in the sense that $\fb_i = \eb_i$ is its solution; it remains to show that the latter is the only solution, that is, $\Mb_i$ is full rank.
Denoting the matrix $\Mb_i$ with $\mu = 0$ by $\Mb_i^0$, we observe that it is full rank if and only if $\lambda_i$ is a simple eigenvalue; otherwise, if it has multiplicity $m_i>1$, $\Mb_i^0$ is rank-$m_i$ deficient with the null space  $\mathrm{null}(\Mb_i^0) =  \mathrm{span}\{ \eb_{i+1}, \dots, \eb_{i+m_i} \}$ corresponding to the vanishing rows of $\Mb_i^0$.  In order to make $\Mb_i$ full rank for $\mu>0$, the range of $\bb{R} = \left( \Ib - \frac{1}{v_i} \vb \eb_i^\Tr \right) \Gb$ has to contain the latter null space, which happens if for every $j = i+1,\dots, i+m_i$, $\bb{R} \eb_j \ne \bb{0}$. Rearranging terms in $\bb{R}$ yields $v_i \Gb \eb_j \ne \vb \eb_i^\Tr \Gb \eb_j$; substituting the defintion of $\Gb$ in terms of $\Db$ and $\Ub$, and using $\ub_i = \Ub \eb_i$ and $v_i = \ones^\Tr \ub_i$ yields
\begin{eqnarray}
(\ones^\Tr \ub_i) \Ub^\Tr \Db \Db^\Tr \ub_j &\ne& \vb \eb_i^\Tr \Ub^\Tr \Db \Db^\Tr \ub_j,
\end{eqnarray}
from where the first condition of the theorem follows.

In the second case where $v_i = 0$ ($\bb{u}_i$ belongs to a hostile eigenspace), the Lagrange multiplier $\gamma_i$ remains undetermined and the system $\bb{M}_i \bb{f}_i = \bb{c}_i$ is defined by
\begin{eqnarray}
\bb{M}_i &=& \mathrm{diag}\{  (\lambda_i - \lambda_1)^2, \dots, (\lambda_i - \lambda_1)^2  \} + \mu \Gb + \eb_i \vb^\Tr \nonumber\\
\bb{c}_i &=& \mu \Gb \eb_i - \gamma_i \vb. 
\end{eqnarray}
Now, if $\lambda_i$ has multiplicity $m_i$, $\Mb_i^0$ is rank-$(m_i+1)$ deficient with the null space  $\mathrm{null}(\Mb_i^0) =  \mathrm{span}\{ \eb_{i}, \dots, \eb_{i+m_i} \}$, as the $i$-th row of $\Mb_i^0$ also vanishes.
For $\mu>0$, the system becomes full rank if the range of $\bb{G}$ contains the latter null space, which yields the second condition of the theorem.

\end{proof}






\bibliographystyle{alpha}
\bibliography{gm}

\end{document}